\newcommand{\norm}[1]{\left\lVert#1\right\rVert}
\newcommand\footnoteref[1]{\protected@xdef\@thefnmark{\ref{#1}}\@footnotemark}
  \newcounter{footnote}%
  \renewcommand\thefootnote{\arabic{footnote}}%
\newcommand{\mf}[1]{\mathbf{#1}}
\newcommand{\mr}[1]{\mathrm{#1}}
\title{Adversarial Pursuits in Cislunar Space}
\author{Filippos Fotiadis$^{\star}$\footnote{Postdoctoral Researcher, Oden Institute for Computational Engineering \& Sciences, e-mail: ffotiadis@utexas.edu}, Quentin Rommel$^{\star}$\footnote{Graduate Research Assistant, Department of Aerospace Engineering \& Engineering Mechanics, e-mail: quentin.rommel@utmail.utexas.edu}}
\affil{The University of Texas at Austin, Austin, TX, 78712}
\author{Gregory Falco\footnote{Assistant Professor, Department of Mechanical and Aerospace Engineering, e-mail: gfalco@cornell.edu}}
\affil{Cornell University, Ithaca, NY, 14850}
\author{Ufuk Topcu\footnote{Professor, Department of Aerospace Engineering \& Engineering Mechanics, e-mail: utopcu@utexas.edu}}
\affil{The University of Texas at Austin, Austin, TX, 78712}
\begin{document}

\maketitle

\makeatletter
\let\svthefootnote\thefootnote
\let\thefootnote\relax
\footnotetext{$^{\star}$ Equal contribution}
\let\thefootnote\svthefootnote
\makeatother

\begin{abstract}
Cislunar space is becoming a critical domain for future lunar and interplanetary missions, yet its remoteness, sparse infrastructure, and unstable dynamics create single points of failure. Adversaries in cislunar orbits can exploit these vulnerabilities to pursue and jam co-located communication relays, potentially severing communications between lunar missions and the Earth. We study a pursuit-evasion scenario between two spacecraft in a cislunar orbit, where the evader must avoid a pursuer-jammer while remaining close to its nominal trajectory. We model the evader-pursuer interaction as a zero-sum adversarial differential game cast in the circular restricted three-body problem. This formulation incorporates critical aspects of cislunar orbital dynamics, including autonomous adjustment of the reference orbit phasing to enable aggressive evading maneuvers, and shaping of the evader’s cost with the orbit’s stable and unstable manifolds. We solve the resulting nonlinear game locally using a continuous-time differential dynamic programming variant, which iteratively applies linear-quadratic approximations to the Hamilton-Jacobi-Isaacs equation. We simulate the evader’s behavior against both a worst-case and a linear-quadratic pursuer. Our results pave the way for securing future missions in cislunar space against emerging cyber threats.
\end{abstract}

\section{Nomenclature}

{\renewcommand\arraystretch{1.0}
\noindent\begin{longtable*}{@{}l @{\quad=\quad} l@{}}
CR3BP &    Circular Restricted Three-Body Problem \\
DDP  & Differential Dynamic Programming \\
HJI & Hamilton-Jacobi-Isaacs \\
IMU & Inertial Measurement Unit \\
$L_1/L_2$ & Earth-Moon collinear Lagrange points\\
NASA & National Aeronautics and Space Administration
\end{longtable*}}

\section{Introduction}

\lettrine[lines=2,lhang=0.1,nindent=0em]{C}islunar space is becoming an important region for upcoming lunar and
 interplanetary missions \cite{crusan2019nasa, baker2024comprehensive, holzinger2021primer}, yet its remote and chaotic nature creates unique cybersecurity challenges. Unlike traditional low-Earth or geosynchronous orbits, where redundancy is easier to afford, cislunar missions will be comparatively sparse and heavily reliant on long-distance communication links \cite{badura2023optimizing, kurtsecurity}, creating potential single points of failure. This gives adversaries novel opportunities to pursue target spacecraft, disrupt radio transmissions, and, by exploiting the unstable dynamics of cislunar space, drive them off their nominal orbits. Ensuring a safe, long-term presence in cislunar space thus requires addressing these orbital threats, particularly through evasive strategies that allow spacecraft to maneuver away from hostile assets.

Recent in-orbit offensive maneuvers highlight the urgent need for defensive strategies against pursuits. A well-known example occurred in 2015, when Russia’s “Luch/Olymp-K” satellite maneuvered unusually close to commercial communications satellites in geostationary orbit \cite{roberts2024method, sankaran2022russia, roberts2020sustainable}. To deal with such pursuits, pursuit-evasion games have provided a mathematical method for modeling adversarial encounters, framing the interaction as a zero-sum game in which the pursuer seeks interception while the evader tries to escape \cite{bacsar1998dynamic, isaacs1999differential}.  Such games have been successfully applied in the context of Earth orbits and under the assumption of Keplerian motion \cite{fu2025analytical, shen2018revisit, mehlman2024cat}. Nevertheless, in cislunar space, the dynamics are typically modeled by the circular restricted three-body problem (CR3BP), which is nonlinear and unstable \cite{koon2000dynamical, connor1984three, szebehely1967}. This makes classical two-body Keplerian approaches, particularly those based on local frames, inapplicable.

Lagrange points in cislunar space provide key benefits for navigation and communication, yet the complex dynamics of the CR3BP require frequent stationkeeping to stay in their vicinity \cite{gomez1998station}. This can make evasive maneuvers against pursuers challenging and risky to perform. On the other hand, cyber-physical attacks like jamming or spoofing also become difficult to carry out as they are easier to detect and require precise pointing. However, the payoff potential is extraordinarily high due to the limited defense options. This risk and reward balance makes it critical to develop pursuit-evasion strategies for spacecraft operating near Lagrange-point orbits.

A spacecraft in cislunar space must guard against three broad types of pursuits. One of the simplest is proximity‐based interference as presented in Figure \ref{fig:cislunar_jamming_diagram}: an attacker drifts close enough to jam communications or interfere with onboard sensors. Because ground‐station links are already weak at lunar distances, even low‐power interference can prevent precise communication-based orbit determination, resulting in a higher risk for the spacecraft to drift away from its mission. A second, more aggressive threat is direct interception, or the “kamikaze” attack. Here, the attacker maneuvers to match the spacecraft’s position (and often its velocity) precisely, with the intent to collide or destroy.  Between these extremes lies the third, rendezvous‐and‐inspection threat. By aligning both position and velocity with the target spacecraft, an adversary can hover alongside to observe or subtly nudge the target off its nominal orbit before moving on to new objectives. Small thrusts or brief sensor interference during such proximity operations can introduce navigation errors that grow over time. Because satellites in $L_1$/$L_2$ orbits rely heavily on ground‐based radiometric tracking to correct onboard sensor drift, any jamming or spoofing will allow chaotic dynamics to amplify even tiny errors into mission‐ending deviations. Cross‐checks among optical navigation, IMUs, and other onboard measurements may not be precise enough to compensate once external interference begins, and the resulting errors will become harder to correct over time.
\begin{figure}
    \centering
    \includegraphics[width=\linewidth]{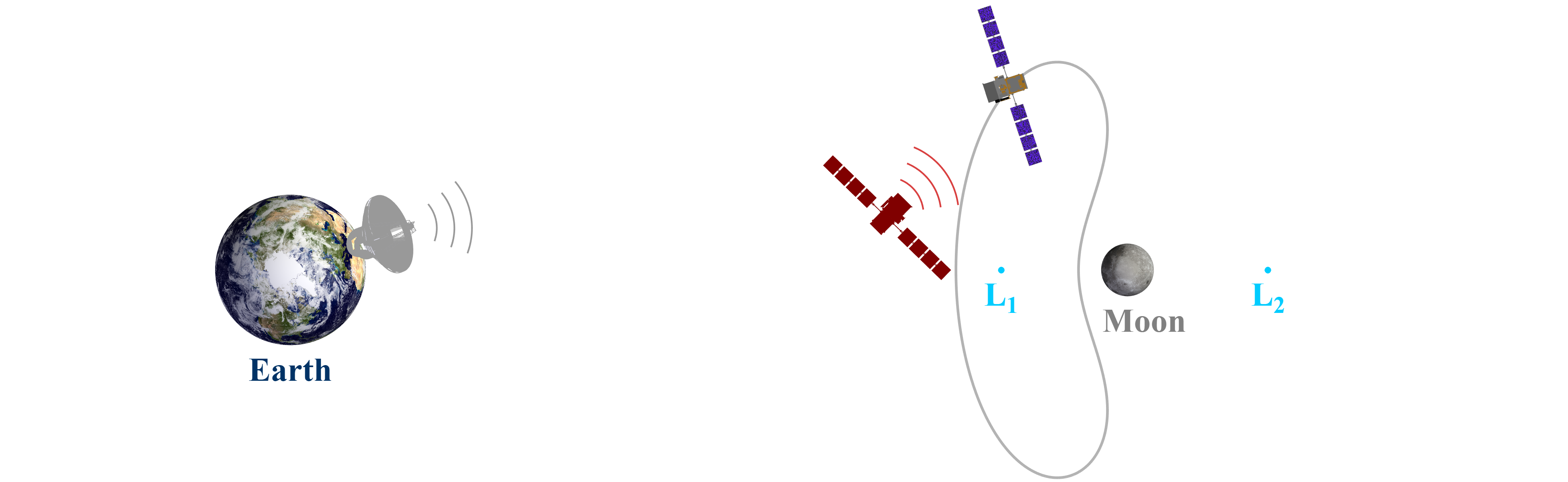}
    \caption{Close-proximity jamming in a periodic orbit around $L_1$. Because of the vast distances involved, the Earth-satellite communication link is relatively weak, making it especially susceptible to jamming. To mitigate the jamming effect, the satellite must maneuver away from the jammer. }
    \label{fig:cislunar_jamming_diagram}
\end{figure}

We address an evader-pursuer interaction in cislunar space by formulating it as a zero-sum differential game. The evader seeks to remain close to its nominal orbit while avoiding the pursuer, whose objective is to approach and potentially disrupt communications. The game is posed in the circular restricted three-body problem and incorporates features critical for navigating cislunar space: autonomous phasing adjustment of the reference orbit to enable aggressive maneuvers, and inclusion of stable and unstable manifolds in the evader’s cost to encourage fuel-efficient motion. We solve the nonlinear game using continuous-time differential dynamic programming, obtaining local saddle-point feedback strategies that balance separation, reference tracking, and fuel efficiency. These results advance defensive guidance methods to secure future cislunar operations against physical and cyber threats.

\textbf{Contributions.} 
This paper makes the following contributions.
\begin{itemize}
\item We extend \emph{pursuit-evasion games to the cislunar regime} by explicitly modeling the nonlinear dynamics of the circular restricted three-body problem.
\item We introduce \emph{reference orbit phasing adjustment} as an additional control variable to enable along-track evasive maneuvers not captured in prior Keplerian formulations.
\item We design a \emph{manifold-aware cost function} that embeds stable and unstable orbital directions, guiding evading strategies toward fuel-efficient and lower-risk maneuvers.
\item We benchmark against a linear-quadratic pursuer, showing that our nonlinear formulation, solved via continuous-time differential dynamic programming, outperforms simple quadratic approximations.
\end{itemize}

\textbf{Notation.}  
$I_n$ denotes an identity matrix of dimension $n\times n$. $0_{n}$ denotes a null matrix of dimension $n\times n$. 
$\norm{x}$ is the $\ell_2$ norm of vector or function $x$, whereas
$\norm{x}_Q$ is the $Q-$weighted $\ell_2$ norm of vector or function $x$. Subscripts of the functions $V,\phi,\bar{V},\bar{L},\bar{F}$ denote derivatives with respect to the indicated variable, for example, $V_\mf{x} = \nabla_\mf{x} V$, $V_{\mf{x}\mf{x}} = \nabla_\mf{x}^2 V$, $\bar{L}_\mf{w} = \nabla_\mf{w} \bar{L}$.

\section{Problem Formulation}
\subsection{Spacecraft Dynamics in Cislunar Space}

In cislunar space, one can model the dynamics of a spacecraft under the gravity of the Earth and the Moon according to the circular restricted three‐body problem. In this model, the Earth with mass $m_e$ and the Moon with mass $m_m$ move in circular orbits around their barycenter, while the spacecraft moves under the combined Earth-Moon gravitational field. To describe the spacecraft's motion, let us denote the mass ratio $\mu = \frac{m_m}{m_e + m_m}$ and normalize the system's masses and length units, so that $m_e=1-\mu$, $m_m=\mu$, and so that the distance between the Earth and the Moon is equal to one. In addition, let us consider the rotating frame with origin at the three-body system's barycenter, in which the Earth and the Moon are located at $p_e=[-\mu~0~0]^\mathrm{T}$ and $p_m=[1-\mu~0~0]^\mathrm{T}$. Then, the rotating, non-dimensional equations of motion for the spacecraft in this frame are given by \cite{koon2000dynamical}: 
\begin{equation}\label{eq:CR3BP}
\begin{split}
\ddot{x}&=2\dot{y}+x-\frac{(1-\mu)(x+\mu)}{r_e^3}-\frac{\mu}{r_m^3}(x-1+\mu)+\frac{u_x}{m},\\
\ddot{y}&=-2\dot{x}+y-\frac{1-\mu}{r_e^3}y-\frac{\mu}{r_m^3}y+\frac{u_y}{m},\\
\ddot{z}&=-\frac{1-\mu}{r_e^3}z-\frac{\mu}{r_m^3}z+\frac{u_z}{m}.
\end{split}
\end{equation}
Here, $(x, y, z)$, $(\dot{x}, \dot{y}, \dot{z})$, $(u_x,u_y,u_z)$ are the spacecraft's normalized position, velocity, and thrust, $m$ is its mass, and
\begin{equation*}
r_e=\sqrt{(x+\mu)^2+y^2+z^2},\qquad r_m=\sqrt{(x-1+\mu)^2+y^2+z^2},
\end{equation*}
are the normalized distances of the spacecraft from the Earth and the Moon. Denoting $\mathbf{x}=[x ~ y ~ z ~ \dot{x} ~ \dot{y} ~ \dot{z}]^\mathrm{T}\in\mathbb{R}^6$ and $\mathbf{u}=[u_x ~ u_y ~ u_z]^\mr{T}\in\mathbb{R}^3$, we can rewrite the spacecraft's dynamics \eqref{eq:CR3BP} in the compact form
\begin{equation}\label{eq:CR3BPc}
\dot{\mathbf{x}}=f(\mathbf{x})+B\mathbf{u},
\end{equation}
where $B = [0_{3} ~ \frac{1}{m}I_3 ]^\textrm{T}$ and
\begin{equation}\label{eq:CR3BPc2}
\begin{split}
f(\mathbf{x}):=\begin{bmatrix}f_x(\mathbf{x}) \\ f_y(\mathbf{x}) \\f_z(\mathbf{x}) \\f_{\dot{x}}(\mathbf{x}) \\f_{\dot{y}}(\mathbf{x}) \\f_{\dot{z}}(\mathbf{x}) \end{bmatrix}=\begin{bmatrix}\dot{x} \\ \dot{y} \\ \dot{z} \\ 2\dot{y}+x-\frac{(1-\mu)(x+\mu)}{r_e^3}-\frac{\mu}{r_m^3}(x-1+\mu) \\ -2\dot{x}+y-\frac{1-\mu}{r_e^3}y-\frac{\mu}{r_m^3}y \\ -\frac{1-\mu}{r_e^3}z-\frac{\mu}{r_m^3}z\end{bmatrix}.
\end{split}
\end{equation}

There are five equilibrium points $L_i$, $i=1,\ldots,5$, in the Earth-Moon orbital plane where centrifugal and gravitational forces exactly cancel, and which correspond to fixed points of \eqref{eq:CR3BPc} under $\mathbf{u}=0$. These are called the Lagrange points, and a particle placed at one of these with zero initial velocity will remain there (in theory) indefinitely. From the perspective of lunar missions, Lagrange points and their neighboring orbits are interesting because they are relatively invariant locations for placing scientific instruments, which we can use for communication relays, refueling, astronomy, and other purposes. However, both $L_1$ and $L_2$ points are inherently unstable. Any spacecraft placed into a periodic or quasi-periodic orbit around either point must carry out regular stationkeeping maneuvers. Nevertheless, for $L_2$, the payoff is continuous, unobstructed visibility of both Earth and the lunar far side, a capability no low lunar orbit can match. China’s Queqiao-1 spacecraft has occupied a halo orbit around $L_2$ since 2018, relaying Chang’e-4’s far-side communications \cite{li2021overview}. More recently, NASA’s CAPSTONE mission entered a near-rectilinear halo orbit about $L_2$ to map out the stability regime that the future Lunar Gateway will exploit \cite{advancedspace_news2023}. These examples demonstrate that Lagrange-point orbits offer unique communications and observational advantages for cislunar exploration.

\subsection{Adversarial Pursuits in Cislunar Space}

Despite having several desirable properties, cislunar orbits present significant security issues. One of these is their lack of redundancy and remoteness, which lead to high mission costs and communication challenges. An adversary in these orbits can exploit such vulnerabilities by jamming co-located relays or interfering with optical sensors, potentially interrupting communications between the Earth and the Moon. Unlike operations in low-Earth orbits, where redundancy is plentiful and where satellites often operate in constellations, spacecraft in cislunar orbits may not be able to deal with jamming by means of rerouting. Instead, they must search for novel cyber-physical defense methods that take the limitations of lunar missions explicitly into account.

A potential defense mechanism against jamming is to perform evasive maneuvers. Such maneuvers can increase the distance from an adversarial jammer and hence attenuate the effect of their interference. Many studies have also examined them in depth in the context of near-Earth missions \cite{fu2025analytical, mehlman2024cat}. Nevertheless, orbits in cislunar space are vastly different than those near Earth; they are unstable, and their evolution is dictated by the CR3BP dynamics \eqref{eq:CR3BP}. Spacecraft, relays, and other scientific instruments in cislunar orbits must execute evasive maneuvers with care: while evading, they must remain close to the stability provided by their nominal orbit.
Otherwise, they risk excessive fuel consumption, prolonged deviations lasting weeks, or even permanent loss of the mission due to uncontrolled drift.

In view of the above, we formulate a pursuit-evasion interaction over the CR3BP dynamics \eqref{eq:CR3BP}, where both the pursuer and the evader aim to control their separation while remaining close to the relative stability of their nominal cislunar orbit. Specifically, let us consider an evader spacecraft with state $\mathbf{x}_\mathrm{e}=[\mf{p}_\mathrm{e}^\mr{T}~\mf{v}_\mathrm{e}^\mr{T}]^\mathrm{T}\in\mathbb{R}^6$, and a pursuer spacecraft with state $\mathbf{x}_\mathrm{p}=[\mf{p}_\mathrm{p}^\mr{T}~\mf{v}_\mathrm{p}^\mr{T}]^\mathrm{T}\in\mathbb{R}^6$.  From \eqref{eq:CR3BPc}, the translational dynamics of these spacecraft will evolve according to the CR3BP equations
\begin{equation}\label{eq:spacecraftdyn}
\begin{split}
\dot{\mathbf{x}}_\mathrm{e}(t)&=f({\mathbf{x}}_\mathrm{e}(t))+B_\mr{e}\mathbf{u}_\mathrm{e}(t),\quad {\mathbf{x}}_\mathrm{e}(t_0)={\mathbf{x}}_\mathrm{e0},\\
\dot{\mathbf{x}}_\mathrm{p}(t)&=f({\mathbf{x}}_\mathrm{p}(t))+B_\mr{p}\mathbf{u}_\mathrm{p}(t),\quad {\mathbf{x}}_\mathrm{p}(t_0)={\mathbf{x}}_\mathrm{p0},
\end{split}
\end{equation}
where $B_\mr{e}=[0_{3}~\frac{1}{m_\mr{e}}I_3]^\mr{T}$, $B_\mr{p}=[0_{3}~\frac{1}{m_\mr{p}}I_3]^\mr{T}$, with $m_\mr{e},~m_\mr{p}$ denoting the masses of the evader and pursuer, respectively. The control inputs $\mathbf{u}_\mathrm{e}, \mathbf{u}_\mathrm{p}\in\mathbb{R}^3$ are the normalized thrusts of the evader and the pursuer. 

The purpose of the evader is to follow a desired reference orbit $\mathbf{x}_\mr{de}:\mathbb{R}_+\rightarrow\mathbb{R}^6$ while increasing its distance from the pursuer. On the other hand, the purpose of the pursuer is to track a desired reference orbit $\mathbf{x}_\textrm{dp}:\mathbb{R}_+\rightarrow\mathbb{R}^6$ while decreasing its distance from the evader. We assume both spacecraft follow the same cislunar orbit, and hence we could write $\mathbf{x}_\textrm{de} = \mathbf{x}_\textrm{dp} = \mathbf{x}_\textrm{d}$, where $\mathbf{x}_\textrm{d} : \mathbb{R}_+ \rightarrow \mathbb{R}^6$ denotes a moving reference point along the orbit. However, since the spacecraft occupy different positions on the orbit, it is more accurate to state that
\begin{equation}\label{eq:refs}
\begin{split}
\mathbf{x}_\textrm{de}(t)&=\mathbf{x}_\textrm{d}(c_\mr{e}(t)),\\
\mathbf{x}_\textrm{dp}(t)&=\mathbf{x}_\textrm{d}(c_\mr{p}(t)),
\end{split}
\end{equation}
where $c_\mr{e}(t)$ and $c_\mr{p}(t)$ indicate the different phases of the spacecraft along the orbit.

\begin{remark}
While most adversarial pursuits focus on controlling the relative distance between the evader and the pursuer and ignore reference tracking tasks, these tasks cannot be dispensed with in cislunar space. Since cislunar orbital dynamics are unstable, spacecraft must be bound to follow the relatively stable manifold in the vicinity of their reference orbit, or risk mission loss due to uncontrolled drift. Moreover, large control impulses that change the Jacobi constant enough to cross the critical values at $L_1$ or $L_2$ will open (or close) the necks of the zero-velocity surface, changing the Hill’s regions and thus reachability.
\end{remark} % Add that an increase in energy could lead to new space opened

In what follows, we formulate a nonlinear game over the abovementioned CR3BP dynamics, enabling aggressive yet fuel-efficient evasion maneuvers that account for the intrinsic orbital geometry of cislunar motion.

\section{Zero-Sum Dynamic Game for Cislunar Adversarial Pursuits}

In this section, we formulate a nonlinear pursuit-evasion game in cislunar space that incorporates crucial aspects of the full CR3BP. This includes autonomous adjustment of the reference orbit phasing to enable efficient evading maneuvers, and inclusion of stable and unstable manifolds in the cost to account for fuel-efficient maneuvers.

\subsection{Reference Orbit Phasing Adjustment}

When the reference signals $\mathbf{x}_\textrm{de}, \mathbf{x}_\textrm{dp}$ are fixed, the evading and the pursuing maneuvers of the two spacecraft become substantially restricted. This is because the spacecraft are forced to confine themselves in a vicinity of $\mathbf{x}_\textrm{de}, \mathbf{x}_\textrm{dp}$, whose prescribed temporal evolution restricts effective evasion and pursuit maneuvers. Instead, the spacecraft should be able to use thrust to complete the orbit either faster or slower than normal, hence enabling maneuvers \textit{along} the orbit rather than about it.

To enable maneuvers along the cislunar orbit, let us define the following dynamically controllable phases of the spacecraft reference signals:
\begin{equation}\label{eq:taudyn}
\begin{split}
\dot{c}_\mr{e}(t)&=\tau_\mr{e}(t), \quad {c}_\mr{e}(t_0)=t_{0\mr{e}},\\
\dot{c}_\mr{p}(t)&=\tau_\mr{p}(t), \quad {c}_\mr{p}(t_0)=t_{0\mr{p}},
\end{split}
\end{equation}
where $\tau_\mr{e}(t),~\tau_\mr{p}(t)$ are phase control variables that enable the spacecraft to scale the evolution of their reference orbit phasing. Note that when $\tau_\mr{e}(t),\tau_\mr{p}(t)>1$, then the desired reference signals of the spacecraft \eqref{eq:refs} evolve more quickly than $\mathbf{x}_\mr{d}(t)$. This indicates that the spacecraft want to ``speed through'' the cislunar orbit, and enables them to perform more flexible evading and pursuing maneuvers. Vice versa, when $\tau_\mr{e}(t),\tau_\mr{p}(t)<1$, then the spacecraft ``slow down'' along the trajectory and take longer than the nominal traversal time. With these new variables that control the reference orbit phasing, 
 the spacecraft policies now constitute the tuples $\mathbf{w}_\mr{e}=[\mathbf{u}_\mr{e}^\textrm{T}~\tau_\mr{e}]^\textrm{T}\in\mathbb{R}^4$ and $\mathbf{w}_\mr{p}=[\mathbf{u}_\mr{p}^\textrm{T}~\tau_\mr{p}]^\textrm{T}\in\mathbb{R}^4$, instead of simply the thrusts $\mathbf{u}_\mr{e}$ and $\mathbf{u}_\mr{p}$.

\subsection{Nonlinear Dynamic Game Formulation}

Given the expanded strategy spaces of the evader and the pursuer, we now define the pursuit-evasion game over the CR3BP dynamics. To this end, we note that the reference phasing controls \eqref{eq:taudyn} should typically remain close to unity, since significant deviations can induce excessive stationkeeping costs and potentially unstable maneuvers. With this observation in place, and defining the concatenated state $\mathbf{x}=[\mathbf{x}_\mathrm{e}^\mr{T}~\mathbf{x}_\mathrm{p}^\mr{T}~c_\mr{e}~c_\mr{p}]^\mr{T}\in\mathbb{R}^{14}$, we formulate the cost function of the pursuit-evasion game under the dynamics \eqref{eq:spacecraftdyn} and \eqref{eq:taudyn} as 
\begin{align}\label{eq:nlgame}
\min_{\mathbf{w}_\mr{e}} \max_{\mathbf{w}_\mr{p}} J(\mathbf{w}_\mathrm{e}, \mathbf{w}_\mathrm{p} )=\int_{t_0}^{t_f} L(\mathbf{x}(t),\mathbf{w}_\mr{e}(t),\mathbf{w}_\mr{p}(t),t)\mathrm{d}t + \phi(\mathbf{x}(t_f), t_f)
\end{align}
where
\begin{align*}
L&:= \norm{\Delta{\mathbf{x}}_\mathrm{e}(t)}_{Q_\mr{e}(t)}^2
+ \norm{\mathbf{u}_\mathrm{e}(t)}_{R_\mathrm{e}(t)}^2  + a_\mr{e}(t)(\tau_\mr{e}(t)-1)^2  - \norm{\Delta{\mathbf{x}}_\mathrm{p}(t)}_{Q_\mathrm{p}(t)}^2 - \norm{\mathbf{u}_\mathrm{p}(t)}_{R_\mathrm{p}(t)}^2 - a_\mr{p}(t)(\tau_\mr{p}(t)-1)^2 \\&\qquad\qquad\qquad\qquad\qquad\qquad\qquad\qquad\qquad\qquad\qquad\qquad\qquad\qquad\qquad\qquad\qquad\qquad+S\left(\norm{\mf{p}_{\mr{e}}(t)-\mf{p}_{\mr{p}}(t)}\right), \\  \phi&:=\norm{\Delta{\mathbf{x}}_\mathrm{e}(t_f)}_{F_\mathrm{e}}^2 - \norm{\Delta{\mathbf{x}}_\mathrm{p}(t_f)}_{F_\mr{p}}^2+S\left(\norm{\mf{p}_{\mr{e}}(t_f)-\mf{p}_{\mr{p}}(t_f)}\right).
\end{align*}
This game is subject to
\begin{align}\label{eq:nonlineardyn}
\dot{\mf{x}}(t)=F(\mf{x},\mf{w}_\mr{e},\mf{w}_\mr{p}):=\begin{bmatrix} 
f({\mathbf{x}}_\mathrm{e}(t))+B_\mr{e}\mathbf{u}_\mathrm{e}(t)  \\
f({\mathbf{x}}_\mathrm{p}(t))+B_\mr{p}\mathbf{u}_\mathrm{p}(t) \\ \tau_\mr{e}(t) \\ 
\tau_\mr{p}(t)\end{bmatrix}, \quad\mf{x}(t_0)=\mf{x}_0:=\begin{bmatrix}{\mathbf{x}}_\mathrm{e0}  \\  {\mathbf{x}}_\mathrm{p0} \\ t_\mr{0e}  \\ t_\mr{0p} \end{bmatrix},
\end{align}
with $Q_\textrm{e},~R_\textrm{e}, ~ F_\textrm{e},~Q_\textrm{p},~R_\textrm{p}, ~ F_\textrm{p}\succ0$, $a_\mr{e},~ a_\mr{p}>0$ being weighting matrices and scalars, and $\Delta\mathbf{x}_\textrm{i}(t)=\mathbf{x}_\textrm{i}(t)-\mathbf{x}_\mathrm{di}(t)=\mathbf{x}_\textrm{i}(t)-\mathbf{x}_\mathrm{d}(c_\mr{i}(t))$, $\mr{i}\in\{\mathrm{e}, \mr{p} \}$, being the orbital tracking errors.

In the cost of the game \eqref{eq:nlgame}, the terms $\|\Delta{\mathbf{x}}_\mathrm{e}(t)\|_{Q_\mathrm{e}(t)}^2$, $\|\Delta{\mathbf{x}}_\mathrm{e}(t_f)\|_{F_\mathrm{e}}^2$
incentivize the evader to remain close to the nominal orbit $\mf{x}_\mr{de}$, and $\|\mathbf{u}_\textrm{e}(t)\|_{R_\mathbf{e}(t)}^2$ captures the requirement that fuel consumption is minimal. Moreover, $a_\mr{e}(t)(\tau_\mr{e}(t)-1)^2$ forces the reference phasing controls to remain close to unity.
Finally, $S: \mathbb{R}_+ \rightarrow \mathbb{R}_+$ is a function that penalizes the evader when it is within a specified radius $d_0$ of the pursuer and converges to zero otherwise. A relevant choice for the function $S$ is
\begin{equation}\label{eq:evadecost}
S(d)=\begin{cases}\frac{1}{p}w(d_0-d)^p,~ &d\le d_0 \\ 0   ~&d\ge d_0 \end{cases},
\end{equation}
with $w>0$ and $p>2$. Note that this function is twice continuously differentiable and thus useful in our framework. The rest of the terms reflect reciprocal costs for the pursuer. 

We are particularly interested in a tuple of policies $\{\mathbf{w}_\mathrm{e}^\star, \mathbf{w}_\mathrm{p}^\star\}$ that is a saddle-point solution to the game \eqref{eq:nlgame}. In other words, this tuple should satisfy the inequality
\begin{equation*}
J(\mathbf{w}_\mathrm{e}^\star, \mathbf{w}_\mathrm{p}) \le J(\mathbf{w}_\mathrm{e}^\star, \mathbf{w}_\mathrm{p}^\star) \le J(\mathbf{w}_\mathrm{e}, \mathbf{w}_\mathrm{p}^\star ),\quad \forall \mathbf{w}_\mathrm{e}, \mathbf{w}_\mathrm{p}.
\end{equation*}
To find such a tuple, one needs to compute the value function 
$V:\mathbb{R}^{14}\times[t_0,~t_f]\rightarrow\mathbb{R}_+$ of the game \cite{bacsar1998dynamic}. 
This function is defined as
\begin{equation}
    V(\mathbf{x}_0,t_0) \;=\; J(\mathbf{w}_\mathrm{e}^\star, \mathbf{w}_\mathrm{p}^\star),
\end{equation}
where $\mathbf{x}_0 = 
[  {\mathbf{x}}_\mathrm{e0}^\mr{T}  ~ 
    {\mathbf{x}}_\mathrm{p0}^\mr{T} ~ t_\mr{0e} ~ t_\mr{0p}
]^\mr{T}$
is the concatenated vector of initial states. It is obtained as the unique viscosity solution of the Hamilton--Jacobi--Isaacs (HJI) partial differential equation:
\begin{equation}\label{eq:HJI}
-\frac{\partial V(\mathbf{x},t)}{\partial t} = \min_{\mathbf{w}_\mr{e}} \max_{\mathbf{w}_\mr{p}} \Big\{ L(\mathbf{x},\mathbf{w}_\mr{e},\mathbf{w}_\mr{p},t) + V_{\mathbf{x}}^\mr{T}(\mathbf{x},t)F(\mf{x},\mf{w}_\mr{e},\mf{w}_\mr{p})\Big\}, \quad V(\mathbf{x}, t_f)=\phi(\mathbf{x}, t_f),
\end{equation}
where $V_\mf{x}$ is the gradient of $V$ with respect to $\mf{x}$.
Note here that the right-hand side of \eqref{eq:HJI} is strictly convex in $\mathbf{w}_e$, strictly concave in $\mathbf{w}_p$, and separable. Therefore, the min and the max operators in \eqref{eq:HJI} can be interchanged and yield the same result, meaning that Isaacs' condition is met. Still, \eqref{eq:HJI} remains difficult to solve analytically in general, as it is nonlinear and high-dimensional. For this reason, we will employ a Differential Dynamic Programming (DDP) algorithm that solves it locally through iterative linear-quadratic approximations.

\subsection{Game-Theoretic Differential Dynamic Programming}

Most existing DDP algorithms that approximate Hamilton-Jacobi equations through 
iterative linear-quadratic expansions rely on discretized system dynamics, as 
discretization facilitates the use of matrix operations and simplifies computations. 
However, in the CR3BP setting, the three-body dynamics are unstable, planning horizons span days, and 
discretization errors accumulate quickly even with small step sizes. This issue 
is particularly pronounced for near-rectilinear halo orbits with close perilune passages, 
where the dynamics vary rapidly. In practice, applying discrete-time DDP would not only suffer from error accumulation but would also require very fine discretization, resulting in increased memory and computational demands. For this 
reason, we adopt a continuous-time formulation of DDP that allows for the use of 
adaptive step-size solvers, following~\cite{sun2018min}.

\renewcommand\thefootnote{\arabic{footnote}}

DDP proceeds by locally expanding the HJI equation \eqref{eq:HJI} about a nominal trajectory $(\bar{\mathbf{x}},~\bar{\mathbf{w}}_\mr{e},~\bar{\mathbf{w}}_\mr{p})$. 
This expansion is quadratic in the value function and linear in the system dynamics. 
We denote nominal trajectories with a bar, e.g., $\bar{V}=V(\bar{\mathbf{x}},t)$, and define the perturbations
\begin{equation*}\vspace{-0.3mm}
\delta\mathbf{x}=\mathbf{x}-\bar{\mathbf{x}}, \quad 
\delta\mathbf{w}_{\mr{e}}=\mathbf{w}_{\mr{e}}-\bar{\mathbf{w}}_{\mr{e}}, \quad 
\delta\mathbf{w}_{\mr{p}}=\mathbf{w}_{\mr{p}}-\bar{\mathbf{w}}_{\mr{p}}.
\end{equation*}
A first-order approximation of the perturbed dynamics then takes the form
\begin{equation}\label{eq:forward}
\frac{\mr{d}\delta\mf{x}}{\mr{d}t} \approx \bar{F}_{\mf{x}}\delta\mf{x}+\bar{F}_{\mf{w}_{\mr{e}}}\delta\mf{w}_{\mr{e}}+\bar{F}_{\mf{w}_{\mr{p}}}\delta\mf{w}_{\mr{p}}.
\end{equation}
Subsequently, a second-order expansion of the left-hand side of \eqref{eq:HJI}  about $(\bar{\mathbf{x}},~\bar{\mathbf{w}}_\mr{e},~\bar{\mathbf{w}}_\mr{p})$ is
\begin{equation}\label{eq:left_exp} 
\begin{split}
-\frac{\partial V(\mathbf{x},t)}{\partial t} &\approx -\frac{\partial\bar{V}}{\partial t} -\frac{\partial\bar{V}_{\mathbf{x}}^\mr{T}}{\partial t}\delta\mathbf{x} - \frac{1}{2}\delta\mf{x}^\mr{T}\frac{\partial\bar{V}_{\mf{x}\mf{x}}}{\partial t}\delta\mf{x} \\ &=-\frac{\mr{d}\bar{V}}{\mr{d} t} -\frac{\mr{d}\bar{V}_{\mathbf{x}}^\mr{T}}{\mr{d} t}\delta\mathbf{x} - \frac{1}{2}\delta\mf{x}^\mr{T}\frac{\mr{d}\bar{V}_{\mf{x}\mf{x}}}{\mr{d} t}\delta\mf{x} + \bar{V}_\mf{x}^\textrm{T}\bar{F} + \delta\mf{x}^\mr{T}\bar{V}_{\mf{x}\mf{x}}\bar{F}+\frac{1}{2}\delta\mf{x}^\mr{T}\sum_{i=1}^{{14}}\bar{V}_{\mf{x}\mf{x}\mf{x}}^{(i)}\bar{F}^{(i)}\delta\mf{x},
\end{split}
\end{equation}
where $\bar{V}_{\mathbf{x}\mathbf{x}\mathbf{x}}^{(i)}$ denotes the Hessian of the $i$-th entry of $\bar{V}_\mf{x}$, and $\bar{F}^{(i)}$ denotes the $i$-th entry of $\bar{F}$. In addition, an expansion of the right-hand side of \eqref{eq:HJI}  about $(\bar{\mathbf{x}},~\bar{\mathbf{w}}_\mr{e},~\bar{\mathbf{w}}_\mr{p})$  yields
\begin{equation}\label{eq:right_exp}
\begin{split}
\min_{\mathbf{w}_\mr{e}} \max_{\mathbf{w}_\mr{p}} \Big\{ & L(\mathbf{x},\mathbf{w}_\mr{e},\mathbf{w}_\mr{p},t) + V_{\mathbf{x}}^\mr{T}(\mathbf{x},t)F(\mf{x},\mf{w}_\mr{e},\mf{w}_\mr{p})\Big\} \\\approx &\min_{\mathbf{w}_\mr{e}} \max_{\mathbf{w}_\mr{p}} \Big\{\bar{L}+\bar{L}_\mf{x}^\mr{T}\delta\mf{x}+\bar{L}_{\mf{w}_\mr{e}}^\mr{T}\delta{\mf{w}_\mr{e}}+\bar{L}_{\mf{w}_\mr{p}}^\mr{T}\delta{\mf{w}_\mr{p}} + \frac{1}{2}\begin{bmatrix}\delta\mf{x} \\ \delta\mf{w}_\mr{e} \\ \delta\mf{w}_\mr{p} \end{bmatrix}^\mr{T} \begin{bmatrix} \bar{L}_{\mf{x}\mf{x}} & \bar{L}_{\mf{x}\mf{w}_\mr{e}} & \bar{L}_{\mf{x}\mf{w}_\mr{p}} \\ \bar{L}_{\mf{w}_\mr{e}\mf{x}} & \bar{L}_{\mf{w}_\mr{e}\mf{w}_\mr{e}} & \bar{L}_{\mf{w}_\mr{e}\mf{w}_\mr{p}} \\ \bar{L}_{\mf{w}_\mr{p}\mf{x}} & \bar{L}_{\mf{w}_\mr{p}\mf{w}_\mr{e}} & \bar{L}_{\mf{w}_\mr{p}\mf{w}_\mr{p}}  \end{bmatrix}    \begin{bmatrix}\delta\mf{x} \\ \delta\mf{w}_\mr{e} \\ \delta\mf{w}_\mr{p} \end{bmatrix} \\ & + \bar{V}_{\mf{x}}^\mr{T}\bar{F} +  \bar{V}_{\mf{x}}^\mr{T}\bar{F}_\mf{x}\delta\mf{x}+  \bar{V}_{\mf{x}}^\mr{T}\bar{F}_{\mf{w}_\mr{e}}\delta\mf{w}_\mr{e}+  \bar{V}_{\mf{x}}^\mr{T}\bar{F}_{\mf{w}_\mr{p}}\delta\mf{w}_\mr{p} + \delta \mf{x}^\mr{T}\bar{V}_{\mf{x}\mf{x}}\bar{F} \\& + \delta \mf{x}^\mr{T}\bar{V}_{\mf{x}\mf{x}}\bar{F}_\mf{x}\delta\mf{x} + \delta \mf{x}^\mr{T}\bar{V}_{\mf{x}\mf{x}}\bar{F}_{\mf{w}_\mr{e}}\delta\mf{w}_\mr{e} + \delta \mf{x}^\mr{T}\bar{V}_{\mf{x}\mf{x}}\bar{F}_{\mf{w}_\mr{p}}\delta\mf{w}_\mr{p} + \frac{1}{2}\delta\mf{x}^\mr{T} \sum_{i=1}^{14}\bar{V}_{\mf{x}\mf{x}\mf{x}}^{(i)}\bar{F}^{(i)} \delta\mf{x} \Big\}.
\end{split}
\end{equation}
In our setting, the cost function $L$ is separable in $\mathbf{x}$, $\mathbf{w}_\mr{e}$ and $\mathbf{w}_\mr{p}$, hence we have $\bar{L}_{\mathbf{x}\mathbf{w}_\mr{e}}=\bar{L}_{\mathbf{w}_\mr{e}\mathbf{x}}=\bar{L}_{\mathbf{x}\mathbf{w}_\mr{p}}=\bar{L}_{\mathbf{w}_\mr{p}\mathbf{x}}=\bar{L}_{\mathbf{w}_\mr{p}\mathbf{w}_\mr{e}}=\bar{L}_{\mathbf{w}_\mr{e}\mathbf{w}_\mr{p}}=0$. Given this, equating \eqref{eq:left_exp} to \eqref{eq:right_exp} we obtain
\begin{multline}\label{eq:HJIapprox}
-\frac{\mr{d}\bar{V}}{\mr{d} t} -\frac{\mr{d}\bar{V}_{\mathbf{x}}^\mr{T}}{\mr{d} t}\delta\mathbf{x} - \frac{1}{2}\delta\mf{x}^\mr{T}\frac{\mr{d}\bar{V}_{\mf{x}\mf{x}}}{\mr{d} t}\delta\mf{x} = \min_{\mathbf{w}_\mr{e}} \max_{\mathbf{w}_\mr{p}} \Big\{ \bar{L} + \delta\mf{x}^\mr{T}\bar{Q}_{\mf{x}}+ \delta\mf{w}_\mr{e}^\mr{T}\bar{Q}_{\mf{w}_\mr{e}}+ \delta\mf{w}_\mr{p}^\mr{T}\bar{Q}_{\mf{w}_\mr{p}}\\ + \frac{1}{2}\delta\mf{x}^\mr{T}\bar{Q}_{\mf{x}\mf{x}}\delta\mf{x} + \frac{1}{2}\delta\mf{w}_\mr{e}^\mr{T}\bar{Q}_{\mf{w}_\mr{e}\mf{w}_\mr{e}}\delta\mf{w}_\mr{e} + \frac{1}{2}\delta\mf{w}_\mr{p}^\mr{T}\bar{Q}_{\mf{w}_\mr{p}\mf{w}_\mr{p}}\delta\mf{w}_\mr{p} + \delta\mf{w}_\mr{e}^\mr{T} \bar{Q}_{\mf{w}_\mr{e}\mf{x}}\ \delta\mf{x}+ \delta\mf{w}_\mr{p}^\mr{T} \bar{Q}_{\mf{w}_\mr{p}\mf{x}}\ \delta\mf{x} \Big\}
\end{multline}
where
\begin{subequations}
\begin{align*}
&\bar{Q}_{\mf{x}}      = \bar{F}_{\mf{x}}^\mr{T}\bar{V}_{\mf{x}} + \bar{L}_{\mf{x}}, 
&&\bar{Q}_{\mf{w}_{\mr{e}}} = \bar{F}_{\mf{w}_{\mr{e}}}^\mr{T}\bar{V}_{\mf{x}} + \bar{L}_{\mf{w}_{\mr{e}}}, 
&&\bar{Q}_{\mf{w}_{\mr{p}}} = \bar{F}_{\mf{w}_{\mr{p}}}^\mr{T}\bar{V}_{\mf{x}} + \bar{L}_{\mf{w}_{\mr{p}}}, \\
&\bar{Q}_{\mf{x}\mf{x}} = \bar{L}_{\mf{x}\mf{x}} + \bar{V}_{\mf{x}\mf{x}}\bar{F}_{\mf{x}} + \bar{F}_{\mf{x}}^\mr{T}\bar{V}_{\mf{x}\mf{x}}, 
&&\bar{Q}_{\mf{w}_\mr{e}\mf{w}_\mr{e}} = \bar{L}_{\mf{w}_\mr{e}\mf{w}_\mr{e}}, 
&&\bar{Q}_{\mf{w}_\mr{p}\mf{w}_\mr{p}} = \bar{L}_{\mf{w}_\mr{p}\mf{w}_\mr{p}},\\
& \bar{Q}_{\mf{w}_\mr{e}\mf{x}}=\bar{F}_{\mf{w}_\mr{e}}^\mr{T}\bar{V}_{\mf{x}\mf{x}}, && \bar{Q}_{\mf{w}_\mr{p}\mf{x}}=\bar{F}_{\mf{w}_\mr{p}}^\mr{T}\bar{V}_{\mf{x}\mf{x}}.
\end{align*}
\end{subequations}
Using the first-order stationary condition to compute the min and max in \eqref{eq:HJIapprox} yields the optimal controls
\begin{equation}\label{eq:updates}
\delta\mf{w}_{\mr{e}}^\star = \ell_{\mf{w}_{\mr{e}}} + \mf{K}_{\mf{w}_{\mr{e}}}\delta\mf{x}, \qquad \delta\mf{w}_{\mr{p}}^\star = \ell_{\mf{w}_{\mr{p}}} + \mf{K}_{\mf{w}_{\mr{p}}}\delta\mf{x},
\end{equation}
where
\begin{equation}\label{eq:gains}
\begin{split}
   &\ell_{\mf{w}_\mr{e}}=-\bar{Q}_{\mf{w}_\mr{e}\mf{w}_\mr{e}}^{-1}\bar{Q}_{\mf{w}_\mr{e}},\quad \mf{K}_{{\mf{w}_\mr{e}}}=-\bar{Q}_{\mf{w}_\mr{e}\mf{w}_\mr{e}}^{-1}\bar{Q}_{\mf{w}_\mr{e}\mf{x}},\\
   &\ell_{\mf{w}_\mr{p}}=-\bar{Q}_{\mf{w}_\mr{p}\mf{w}_\mr{p}}^{-1}\bar{Q}_{\mf{w}_\mr{p}},\quad \mf{K}_{{\mf{w}_\mr{p}}}=-\bar{Q}_{\mf{w}_\mr{p}\mf{w}_\mr{p}}^{-1}\bar{Q}_{\mf{w}_\mr{p}\mf{x}}.
   \end{split}
\end{equation}
Finally, equating the zero, first, and second order terms in \eqref{eq:HJIapprox} under the controls \eqref{eq:updates} yields the backward differential equations
\begin{equation}\label{eq:backward}
\begin{split}
-\frac{\mr{d}\bar{V}}{\mr{d}t} &= \bar{L}+\ell_{\mf{w}_\mr{e}}^\textrm{T}\bar{Q}_{\mf{w}_\mr{e}}+\ell_{\mf{w}_\mr{p}}^\textrm{T}\bar{Q}_{\mf{w}_\mr{p}} + \frac{1}{2}\ell_{\mf{w}_\mr{e}}^\textrm{T}\bar{Q}_{\mf{w}_\mr{e}\mf{w}_\mr{e}}\ell_{\mf{w}_\mr{e}} + \frac{1}{2}\ell_{\mf{w}_\mr{p}}^\textrm{T}\bar{Q}_{\mf{w}_\mr{p}\mf{w}_\mr{p}}\ell_{\mf{w}_\mr{p}}, \\
-\frac{\mr{d}\bar{V}_\mf{x}}{\mr{d}t} &= \bar{Q}_\mf{x}+\mf{K}_{\mf{w}_\mr{e}}^\mr{T}\bar{Q}_{\mf{w}_\mr{e}}+\mf{K}_{\mf{w}_\mr{p}}^\mr{T}\bar{Q}_{\mf{w}_\mr{p}} + \bar{Q}_{\mf{w}_\mr{e}\mf{x}}^\mr{T}\ell_{\mf{w}_\mr{e}}+ \bar{Q}_{\mf{w}_\mr{p}\mf{x}}^\mr{T}\ell_{\mf{w}_\mr{p}} + \mf{K}_{\mf{w}_\mr{e}}^\mr{T}\bar{Q}_{\mf{w}_\mr{e}\mf{w}_\mr{e}}\ell_{\mf{w}_\mr{e}}+ \mf{K}_{\mf{w}_\mr{p}}^\mr{T}\bar{Q}_{\mf{w}_\mr{p}\mf{w}_\mr{p}}\ell_{\mf{w}_\mr{p}}, \\
-\frac{\mr{d}\bar{V}_{\mf{x}\mf{x}}}{\mr{d}t} &=\mf{K}_{\mf{w}_\mr{e}}^\mr{T}\bar{Q}_{\mf{w}_\mr{e}\mf{x}} + \bar{Q}_{\mf{w}_\mr{e}\mf{x}}^\mr{T}\mf{K}_{\mf{w}_\mr{e}}
+\mf{K}_{\mf{w}_\mr{p}}^\mr{T}\bar{Q}_{\mf{w}_\mr{p}\mf{x}} + \bar{Q}_{\mf{w}_\mr{p}\mf{x}}^\mr{T}\mf{K}_{\mf{w}_\mr{p}}
+ \mf{K}_{\mf{w}_\mr{e}}^\mr{T}\bar{Q}_{\mf{w}_\mr{e}\mf{w}_\mr{e}} \mf{K}_{\mf{w}_\mr{e}}+ \mf{K}_{\mf{w}_\mr{p}}^\mr{T}\bar{Q}_{\mf{w}_\mr{p}\mf{w}_\mr{p}} \mf{K}_{\mf{w}_\mr{p}} + \bar{Q}_{\mf{x}\mf{x}},
\end{split}
\end{equation}
where, under a second-order approximation of the boundary condition of \eqref{eq:HJI}, we have
\begin{equation}\label{eq:boundary}
\bar{V}(t_f)=\phi(\bar{\mf{x}}(t_f), t_f),\quad
\bar{V}_{\mf{x}}(t_f)=\phi_{\mf{x}}(\bar{\mf{x}}(t_f), t_f),\quad
\bar{V}_{\mf{x}\mf{x}}(t_f)=\phi_{\mf{x}\mf{x}}(\bar{\mf{x}}(t_f), t_f).
\end{equation}
DDP then iteratively solves the forward perturbation equation \eqref{eq:forward} and the backward equations \eqref{eq:backward} until convergence. The full procedure is described in Algorithm \ref{al:DDP}.

\begin{algorithm}
\caption{Game-Theoretic DDP for Adversarial Pursuits in Cislunar Space}
\hspace*{\algorithmicindent} \textbf{Input}: Initial condition $\mf{x}_0$, initial evader and pursuer policies $\{\bar{\mf{w}}_\mr{e},~\bar{\mf{w}}_\mr{p}\}$, initial and final times $t_0,~t_f$, tolerance $\epsilon>0$.\\
\hspace*{\algorithmicindent} \textbf{Output}: Evader and pursuer policies $\mathbf{w}_\mr{e}^\star$, $\mathbf{w}_\mr{p}^\star$, feedforward gains $\ell_{\mf{w}_\mr{e}}, \ell_{\mf{w}_\mr{p}}$, feedback gains $\mf{K}_{\mf{w}_\mr{e}}, \mf{K}_{\mf{w}_\mr{p}}$.
\begin{algorithmic}[1]
\Procedure{}{}
\While{$\|\delta\mathbf{w}_\mr{e}^\star\| + \|\delta\mathbf{w}_\mr{p}^\star\| > \epsilon$}
    \State \textbf{Forward rollout (nonlinear):} Propagate $\bar{\mathbf{x}}$ over $[t_0,t_f]$ under $(\bar{\mathbf{w}}_\mr{e},\bar{\mathbf{w}}_\mr{p})$ from \eqref{eq:nonlineardyn}.
    \State \textbf{Terminal conditions:} Compute $(\bar{V}(t_f),\bar{V}_{\mathbf{x}}(t_f),\bar{V}_{\mathbf{x}\mathbf{x}}(t_f))$ from \eqref{eq:boundary}.
    \State \textbf{Backward sweep:} Propagate $(\bar{V},\bar{V}_{\mathbf{x}},\bar{V}_{\mathbf{x}\mathbf{x}})$ over $[t_0,~t_f]$ from \eqref{eq:backward}.
    \State \textbf{Gain Computation:} Compute $\ell_{\mf{w}_\mr{e}}, \ell_{\mf{w}_\mr{p}}$ and $\mf{K}_{\mf{w}_\mr{e}}, \mf{K}_{\mf{w}_\mr{p}}$ from \eqref{eq:gains}.
    \State \textbf{Forward rollout (linearized):} Propagate $\delta\bar{\mathbf{x}}$ over $[t_0,t_f]$ from \eqref{eq:forward} under $\delta{\mf{w}_\mr{e}}=\delta\mf{w}_\mr{e}^\star$, $\delta{\mf{w}_\mr{p}}=\delta\mf{w}_\mr{p}^\star$ in \eqref{eq:updates}.
    \State \textbf{Control refinement:} Update nominal controllers
    \[
      \bar{\mathbf{w}}_\mr{e} \leftarrow \bar{\mathbf{w}}_\mr{e} + \gamma\,\delta\mathbf{w}_\mr{e}^\star,\qquad
      \bar{\mathbf{w}}_\mr{p} \leftarrow \bar{\mathbf{w}}_\mr{p} + \gamma\,\delta\mathbf{w}_\mr{p}^\star,
    \]
    \qquad\quad where $\gamma\in(0,1]$.
    \State \textbf{Control update:} Set $\mathbf{w}_\mr{e}^\star=\bar{\mathbf{w}}_\mr{e}$ and $\mathbf{w}_\mr{p}^\star=\bar{\mathbf{w}}_\mr{p}$.
\EndWhile
\EndProcedure
\end{algorithmic}\label{al:DDP}
\end{algorithm}

\begin{remark}
Since the adversarial pursuit takes place about a nominal cislunar orbit that is an input-free solution of the CR3BP dynamics, a natural initialization of Algorithm \ref{al:DDP} is $\bar{\mf{u}}_\mr{e}=\bar{\mf{u}}_\mr{p}=0$ and $\tau_\mr{e}=\tau_\mr{p}=1$. Moreover, for periodic orbits with period $T$, it is natural to select $t_f=t_0+T$ so that phases of the cislunar orbit with enhanced controllability are captured within the prediction horizon. For instance, the near-rectilinear halo orbit in Figure \ref{fig:halo_space} generally exhibits greater controllability near perilune. 
\end{remark}

\subsection{Regularization of Continuous-Time DDP}

In discrete-time implementations of DDP, it is common for the Hessian 
$\bar{\mr{Q}}_{\mf{w}_\mr{e}\mf{w}_\mr{e}}$, and by extension the Hessian 
$\bar{\mr{Q}}_{\mf{w}_\mr{p}\mf{w}_\mr{p}}$, to become indefinite during the backward 
pass. This loss of definiteness can cause divergence of the algorithm and therefore 
motivates the use of regularization to enforce strict definiteness \cite{regularization}. In contrast, 
in the continuous-time DDP setting we consider, both 
$\bar{\mr{Q}}_{\mf{w}_\mr{e}\mf{w}_\mr{e}}$ and $\bar{\mr{Q}}_{\mf{w}_\mr{p}\mf{w}_\mr{p}}$ are 
guaranteed to be strictly definite, since we have the formulas
\begin{equation*}
\bar{\mr{Q}}_{\mf{w}_\mr{e}\mf{w}_\mr{e}} = 
\begin{bmatrix}2R_\mr{e} & 0 \\ 0 & 2a_\mr{e}\end{bmatrix}\succ0, 
\qquad 
\bar{\mr{Q}}_{\mf{w}_\mr{p}\mf{w}_\mr{p}} = 
\begin{bmatrix}-2R_\mr{p} & 0 \\ 0 & -2a_\mr{p}\end{bmatrix}\prec 0.
\end{equation*}
However, unlike in the discrete-time case, strict definiteness of these matrices 
does not by itself ensure well-posedness of the backward pass. This is because the 
associated differential Riccati equation for $\bar{V}_{\mf{x}\mf{x}}$ in~\eqref{eq:backward} is quadratic and may  exhibit finite-time blow-up even when 
$\bar{\mr{Q}}_{\mf{w}_\mr{e}\mf{w}_\mr{e}}\succ0$ and 
$\bar{\mr{Q}}_{\mf{w}_\mr{p}\mf{w}_\mr{p}}\prec0$ hold uniformly. 
For this reason, we argue that regularization of 
$\bar{\mr{Q}}_{\mf{w}_\mr{e}\mf{w}_\mr{e}}$ and $\bar{\mr{Q}}_{\mf{w}_\mr{p}\mf{w}_\mr{p}}$ 
remains necessary during the backward pass of DDP, and this regularization should be chosen 
large enough to guarantee the existence of unique, bounded solutions to \eqref{eq:backward}. 
The following result formalizes this guarantee.

\begin{theorem}
Consider the backward differential equations \eqref{eq:backward}, with $\bar{\mr{Q}}_{\mf{w}_\mr{e}\mf{w}_\mr{e}}$ and 
$\bar{\mr{Q}}_{\mf{w}_\mr{p}\mf{w}_\mr{p}}$ substituted with $\bar{\mr{Q}}_{\mf{w}_\mr{e}\mf{w}_\mr{e}}+\lambda I_{4}$ and 
$\bar{\mr{Q}}_{\mf{w}_\mr{p}\mf{w}_\mr{p}}-\lambda I_{4}$, $\lambda\ge0$. Let $\bar{\mf{x}}$ be a continuous solution of \eqref{eq:nonlineardyn}. Then, there exists $\lambda^\star\ge0$ such that if $\lambda\ge\lambda^\star$ then the equations \eqref{eq:backward} admit unique, bounded solutions over $t\in[t_0,~t_f]$.
\end{theorem}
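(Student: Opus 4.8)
The plan is to reduce the whole system \eqref{eq:backward} to the single matrix equation for $\bar{V}_{\mf{x}\mf{x}}$, since the equations for $\bar{V}$ and $\bar{V}_{\mf{x}}$ are linear in their unknowns once $\bar{V}_{\mf{x}\mf{x}}$ is known. Substituting the regularized gains $\mf{K}_{\mf{w}_\mr{i}}=-(\bar{Q}_{\mf{w}_\mr{i}\mf{w}_\mr{i}}+\lambda I_{4})^{-1}\bar{Q}_{\mf{w}_\mr{i}\mf{x}}$, with $\bar{Q}_{\mf{w}_\mr{i}\mf{x}}=\bar{F}_{\mf{w}_\mr{i}}^\mr{T}\bar{V}_{\mf{x}\mf{x}}$, into the third line of \eqref{eq:backward} and writing $P:=\bar{V}_{\mf{x}\mf{x}}$, the evader and pursuer contributions each collapse to a single quadratic form, leaving the matrix Riccati equation
\begin{equation*}
-\dot{P} = \bar{L}_{\mf{x}\mf{x}} + P\bar{F}_{\mf{x}} + \bar{F}_{\mf{x}}^\mr{T}P - P\,(M_\mr{e}+M_\mr{p})\,P, \qquad P(t_f)=\phi_{\mf{x}\mf{x}}(\bar{\mf{x}}(t_f),t_f),
\end{equation*}
where $M_\mr{i}:=\bar{F}_{\mf{w}_\mr{i}}(\bar{Q}_{\mf{w}_\mr{i}\mf{w}_\mr{i}}+\lambda I_{4})^{-1}\bar{F}_{\mf{w}_\mr{i}}^\mr{T}$. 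The theorem thus reduces to proving that $P$ exists and remains bounded on $[t_0,t_f]$; boundedness of $\bar{V}_{\mf{x}}$ (a linear ODE, solved by variation of constants) and of $\bar{V}$ (a quadrature) then follows automatically.

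Next I would control the quadratic coefficient. Since $\bar{Q}_{\mf{w}_\mr{e}\mf{w}_\mr{e}}\succ0$, the regularized matrix $\bar{Q}_{\mf{w}_\mr{e}\mf{w}_\mr{e}}+\lambda I_{4}\succ0$ for every $\lambda>0$; and since $\bar{Q}_{\mf{w}_\mr{p}\mf{w}_\mr{p}}=\mathrm{diag}(-2R_\mr{p},-2a_\mr{p})$, the matrix $\bar{Q}_{\mf{w}_\mr{p}\mf{w}_\mr{p}}+\lambda I_{4}$ is invertible (indeed positive definite) as soon as $\lambda>2\max\{\lambda_{\max}(R_\mr{p}),a_\mr{p}\}=:\lambda_1$, which steps past the singular values of the regularized pursuer Hessian. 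For $\lambda\ge 2\lambda_1$ both inverses have spectral norm $O(1/\lambda)$, and because $\bar{F}_{\mf{w}_\mr{e}},\bar{F}_{\mf{w}_\mr{p}}$ are constant selection matrices, one obtains $\|M_\mr{e}\|,\|M_\mr{p}\|\le C/\lambda$ for a constant $C$ independent of $\lambda$ and $t$ (only the norm, not the sign, is needed). Along the given continuous trajectory $\bar{\mf{x}}$ on the compact interval $[t_0,t_f]$, the coefficients $\bar{F}_{\mf{x}}(t)$ and $\bar{L}_{\mf{x}\mf{x}}(t)$ are continuous, hence bounded; I set $\alpha:=\sup_t\|\bar{L}_{\mf{x}\mf{x}}\|$, $\beta:=2\sup_t\|\bar{F}_{\mf{x}}\|$, and $\gamma(\lambda):=\sup_t\|M_\mr{e}+M_\mr{p}\|\le C/\lambda$.

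The core of the argument is an a priori bound that excludes finite-time blow-up. The right-hand side of the reduced Riccati equation is polynomial in $P$ with continuous-in-$t$ coefficients, so Picard--Lindel\"of yields a unique maximal solution, and the only obstruction to extending it to all of $[t_0,t_f]$ is divergence of $\|P\|$. Estimating the one-sided derivative of $r(t):=\|P(t)\|$ gives the scalar differential inequality $|\dot{r}|\le \alpha+\beta r+\gamma(\lambda)\,r^2$, with terminal datum $r(t_f)=\|\phi_{\mf{x}\mf{x}}(\bar{\mf{x}}(t_f),t_f)\|=:r_0<\infty$. I would then run a bootstrap/comparison argument over the horizon $T:=t_f-t_0$: let $M_0:=(r_0+\alpha/\beta)e^{\beta T}$ be the bound furnished by the $\gamma=0$ linear majorant (with the obvious modification $r_0+\alpha T$ if $\beta=0$), fix the target $M:=2M_0$, and choose $\lambda^\star\ge 2\lambda_1$ large enough that $\gamma(\lambda^\star)M$ is so small that the linearized majorant $\dot{\sigma}=\alpha+(\beta+\gamma(\lambda)M)\sigma$ stays below $M$ on $[0,T]$. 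A standard continuity argument then confirms $r(t)\le M$ throughout $[t_0,t_f]$ for all $\lambda\ge\lambda^\star$, so $P$ never escapes and extends to the whole interval.

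The step I expect to be the main obstacle is exactly this a priori bound. Because the zero-sum cost makes both $\bar{L}_{\mf{x}\mf{x}}$ and $\phi_{\mf{x}\mf{x}}$ indefinite, the Riccati equation lacks the monotonicity a standard LQR Riccati enjoys, so finite-time escape is genuinely possible for moderate $\lambda$ and cannot be ruled out by a one-sided comparison alone; all the leverage comes from the fact that regularization shrinks the quadratic coefficient to $O(1/\lambda)$, pushing the escape time of the scalar comparison ODE beyond $t_f$. Once $P=\bar{V}_{\mf{x}\mf{x}}$ is shown bounded, back-substitution renders the $\bar{V}_{\mf{x}}$-equation a linear ODE with bounded coefficients (hence a unique bounded solution) and $\bar{V}$ a quadrature of bounded integrands, so the full system \eqref{eq:backward} admits unique, bounded solutions, and any $\lambda^\star$ exceeding both $2\lambda_1$ and the smallness threshold above is admissible.
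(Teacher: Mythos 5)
Your proposal is correct, but it proves the theorem by a genuinely different route than the paper. The paper's proof keeps all three backward equations together and argues by a limiting procedure: as $\lambda\to\infty$ the feedforward and feedback gains vanish, so the regularized right-hand sides converge uniformly on compact sets to a system that is \emph{linear} in $(\bar{V},\bar{V}_{\mf{x}},\bar{V}_{\mf{x}\mf{x}})$ and hence has bounded solutions on $[t_0,t_f]$; existence of $\lambda^\star$ is then obtained by appealing to continuity of solutions with respect to the parameter $\lambda$. You instead exploit the triangular structure of \eqref{eq:backward} — the $\bar{V}_{\mf{x}\mf{x}}$ equation is self-contained, the $\bar{V}_{\mf{x}}$ equation is linear once $\bar{V}_{\mf{x}\mf{x}}$ is known, and $\bar{V}$ is a quadrature — collapse the Hessian equation to a Riccati equation whose quadratic coefficient $M_\mr{e}+M_\mr{p}$ has norm $O(1/\lambda)$, and rule out finite-time escape by a scalar comparison/bootstrap. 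The two arguments trade off as follows: your route is more quantitative, yielding an (in principle computable) threshold $\lambda^\star$ in terms of the horizon length, the terminal Hessian, and $\sup_t\|\bar{F}_{\mf{x}}\|$, $\sup_t\|\bar{L}_{\mf{x}\mf{x}}\|$; it also makes rigorous precisely the step the paper leaves implicit, since the textbook continuous-dependence-on-parameters theorem used for the final "by continuity with respect to $\lambda$" step itself requires confining the solution to a compact tube around the limiting solution — which is exactly the a priori bound you establish directly. You additionally handle a point the paper glosses over: for intermediate $\lambda$ the matrix $\bar{Q}_{\mf{w}_\mr{p}\mf{w}_\mr{p}}+\lambda I_4$ passes through singularity, so the regularized equations are only well-defined past your $\lambda_1$; requiring $\lambda^\star\ge 2\lambda_1$ resolves this. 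The paper's argument, in exchange, is shorter and avoids any matrix-norm bookkeeping. One cosmetic caution: the weights $R_\mr{i}(t)$, $a_\mr{i}(t)$ are allowed to be time-varying in the game formulation, so your constants $\lambda_1$ and $C$ should be defined with suprema over $[t_0,t_f]$; since these weights are continuous on a compact interval, this changes nothing of substance.
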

\begin{proof}
The right-hand sides of~\eqref{eq:backward} are continuous on 
$t$ and continuously differentiable in $(\bar{V},\bar{V}_\mf{x},\bar{V}_\mf{xx})$, hence admit unique solutions \cite{khalil2002nonlinear}.  Moreover, if we replace $\bar{\mr{Q}}_{\mf{w}_\mr{e}\mf{w}_\mr{e}}$ and 
$\bar{\mr{Q}}_{\mf{w}_\mr{p}\mf{w}_\mr{p}}$ with 
$\bar{\mr{Q}}_{\mf{w}_\mr{e}\mf{w}_\mr{e}}+\lambda I_{4}$ and 
$\bar{\mr{Q}}_{\mf{w}_\mr{p}\mf{w}_\mr{p}}-\lambda I_{4}$, respectively, then as 
$\lambda\to\infty$, the right-hand sides of~\eqref{eq:backward} converge uniformly
on any compact sets of $(t,\bar{V},\bar{V}_\mf{x},\bar{V}_\mf{xx})$ to
\begin{align*}
-\frac{\mr{d}\bar{V}}{\mr{d}t} &= \bar{L}+\ell_{\mf{w}_\mr{e}}^\textrm{T}\bar{Q}_{\mf{w}_\mr{e}}+\ell_{\mf{w}_\mr{p}}^\textrm{T}\bar{Q}_{\mf{w}_\mr{p}} + \frac{1}{2}\ell_{\mf{w}_\mr{e}}^\textrm{T}\bar{Q}_{\mf{w}_\mr{e}\mf{w}_\mr{e}}\ell_{\mf{w}_\mr{e}} + \frac{1}{2}\ell_{\mf{w}_\mr{p}}^\textrm{T}\bar{Q}_{\mf{w}_\mr{p}\mf{w}_\mr{p}}\ell_{\mf{w}_\mr{p}}\xrightarrow{\lambda \to \infty} \bar{L}, \\
-\frac{\mr{d}\bar{V}_\mf{x}}{\mr{d}t} &= \bar{Q}_\mf{x}+\mf{K}_{\mf{w}_\mr{e}}^\mr{T}\bar{Q}_{\mf{w}_\mr{e}}+\mf{K}_{\mf{w}_\mr{p}}^\mr{T}\bar{Q}_{\mf{w}_\mr{p}} + \bar{Q}_{\mf{w}_\mr{e}\mf{x}}^\mr{T}\ell_{\mf{w}_\mr{e}}+ \bar{Q}_{\mf{w}_\mr{p}\mf{x}}^\mr{T}\ell_{\mf{w}_\mr{p}} + \mf{K}_{\mf{w}_\mr{e}}^\mr{T}\bar{Q}_{\mf{w}_\mr{e}\mf{w}_\mr{e}}\ell_{\mf{w}_\mr{e}}+ \mf{K}_{\mf{w}_\mr{p}}^\mr{T}\bar{Q}_{\mf{w}_\mr{p}\mf{w}_\mr{p}}\ell_{\mf{w}_\mr{p}}\xrightarrow{\lambda \to \infty}  \bar{F}_{\mf{x}}^\mr{T}\bar{V}_{\mf{x}} + \bar{L}_{\mf{x}}, \\
-\frac{\mr{d}\bar{V}_{\mf{x}\mf{x}}}{\mr{d}t} &=\mf{K}_{\mf{w}_\mr{e}}^\mr{T}\bar{Q}_{\mf{w}_\mr{e}\mf{x}} + \bar{Q}_{\mf{w}_\mr{e}\mf{x}}^\mr{T}\mf{K}_{\mf{w}_\mr{e}}
+\mf{K}_{\mf{w}_\mr{p}}^\mr{T}\bar{Q}_{\mf{w}_\mr{p}\mf{x}} + \bar{Q}_{\mf{w}_\mr{p}\mf{x}}^\mr{T}\mf{K}_{\mf{w}_\mr{p}}
+ \mf{K}_{\mf{w}_\mr{e}}^\mr{T}\bar{Q}_{\mf{w}_\mr{e}\mf{w}_\mr{e}} \mf{K}_{\mf{w}_\mr{e}}\\&\qquad\qquad\qquad\qquad\qquad\qquad\qquad\qquad\qquad\qquad\qquad + \mf{K}_{\mf{w}_\mr{p}}^\mr{T}\bar{Q}_{\mf{w}_\mr{p}\mf{w}_\mr{p}} \mf{K}_{\mf{w}_\mr{p}} + \bar{Q}_{\mf{x}\mf{x}}\xrightarrow{\lambda \to \infty}\bar{L}_{\mf{x}\mf{x}} + \bar{V}_{\mf{x}\mf{x}}\bar{F}_{\mf{x}} 
+ \bar{F}_{\mf{x}}^\mr{T}\bar{V}_{\mf{x}\mf{x}}.
\end{align*}
This limiting system is linear in 
$(\bar{V},\bar{V}_\mf{x},\bar{V}_\mf{xx})$ and continuous in $t$, and therefore admits bounded 
solutions over $t\in[t_0,\,t_f]$ \cite{khalil2002nonlinear}. By continuity with respect to~$\lambda$, there exists 
$\lambda^\star\ge 0$ such that for all $\lambda\ge\lambda^\star$, the original 
system~\eqref{eq:backward} also has bounded solutions over $t\in[t_0,t_f]$, concluding the proof. \frQED
\end{proof}

Following this, at each backward pass of DDP, we add a sufficiently large regularization to $\bar{\mr{Q}}_{\mf{w}_\mr{e}\mf{w}_\mr{e}}$ and $\bar{\mr{Q}}_{\mf{w}_\mr{p}\mf{w}_\mr{p}}$ until the backward pass \eqref{eq:backward} becomes well-defined.

\subsection{Cost Function Shaping using Stable and Unstable Manifolds}

DDP is generally guaranteed to converge only to a local saddle point of the game \eqref{eq:nlgame}. The quality of this solution can depend on the initial evader-pursuer policies but also on the weighting parameters of the game. For this reason, we design the cost matrices $Q_{\mr{e}},~Q_{\mr{p}}$ to guide DDP toward regions that are less risky and where less fuel is required to track the reference orbit, while steering it away from regions with higher fuel demands. These correspond, respectively, to the orbit’s \textit{stable} and \textit{unstable} manifolds.

In the CR3BP, periodic orbits around Lagrange points have associated stable and unstable manifolds, which form tubes of trajectories in phase space: the stable manifold is the set of trajectories that asymptotically converge back to the orbit as time advances, while the unstable manifold comprises trajectories that diverge away from it. These structures arise from the eigen-directions of the linearized dynamics, and their effect depends on the stability of the orbit; for example, the drift along the unstable manifold of a halo orbit is typically slower than that of a Lyapunov orbit.

To compute the stable and the unstable manifolds of a periodic orbit $\mathbf{x}_{\mr{d}}$, we follow the method described in \cite{koon2000dynamical}. Specifically, we compute the monodromy matrix $\Phi(t,t_0)$ of the periodic orbit by integrating the differential equation
\begin{equation*}
\dot{\Phi}(t,t_0)=A(\mathbf{x}_d(t))\Phi(t,t_0),~\Phi(t_0)=I_6,
\end{equation*}
 where $A(\mathbf{x}_d(t))$ is the Jacobian of the system~\eqref{eq:CR3BP} evaluated at $\mathbf{x}_d(t)$ (see Appendix). After one period $T$, the eigenvalues and eigenvectors of the monodromy matrix, i.e., of $\Phi(t_0+T,t_0)$,
 are evaluated to characterize the orbit’s stability properties. Eigenvalues with magnitude greater than unity correspond to exponentially growing directions (the unstable manifold), whereas those with magnitude less than unity correspond to exponentially decaying directions (the stable manifold). The associated eigenvectors at the initial state $\mathbf{x}_\mr{d}(t_0)$, denoted $\mf{e}_{u0}$ and $\mf{e}_{s0}$, define the local unstable and stable directions, respectively. Their time evolution is governed by the equations 
 \begin{align}
    \mf{e}_{s}(t)=\frac{\Phi(t,t_0)\mf{e}_{s0}}{\norm{\Phi(t,t_0)\mf{e}_{s0}}}, \hspace{2cm }\mf{e}_{u}(t)=\frac{\Phi(t,t_0)\mf{e}_{u0}}{\norm{\Phi(t,t_0)\mf{e}_{u0}}}.
\end{align}

Because perturbations along the unstable manifold dominate long-term divergence, we embed these directions explicitly in the cost function, ensuring that the pursuit-evasion strategies account for the most destabilizing modes of the orbit. To this end, we introduce the projector
\begin{equation*}
P_u(t) = \mf{e}_{u}(t)\mf{e}_{u}^\mr{T}(t)
\end{equation*}
 to isolate the component of the state along the unstable manifold. We then define the reference tracking penalty matrices as 
\begin{equation}\label{eq:manifolds}
\begin{split}
    Q_\mr{i}(t) = \alpha Q_\mr{i0} + (1-\alpha) Q_\mr{i0}^{1/2}P_u(c_\mr{i}(t))Q_\mr{i0}^{1/2}, \qquad\mr{i}\in\{\mr{e},\mr{p}\},\\ F_\mr{i} = \alpha F_\mr{i0} + (1-\alpha) F_\mr{i0}^{1/2}P_u(c_\mr{i}(t_f))F_\mr{i0}^{1/2}, \qquad \mr{i}\in\{\mr{e},\mr{p}\},
\end{split}
\end{equation}
where $Q_\mr{i0},F_\mr{i0}\succ0$ denotes the baseline weighting on position and velocity states, and $\alpha \in [0,1]$ provides a tunable parameter to balance between uniform penalties and enhanced penalization of deviations in the unstable direction. This construction guides DDP toward evading and pursuing policies that remain away from unstable manifolds and thereby require less fuel and are less risky. Note that $Q_\mr{i0},~F_\mr{i0}$ must be block diagonals for the unstable manifold direction to remain undistorted.

\begin{figure}[!t]
    \centering
    \colorbox{black}{\makebox[\dimexpr\linewidth][c]{%
        \includegraphics[width=1\linewidth]{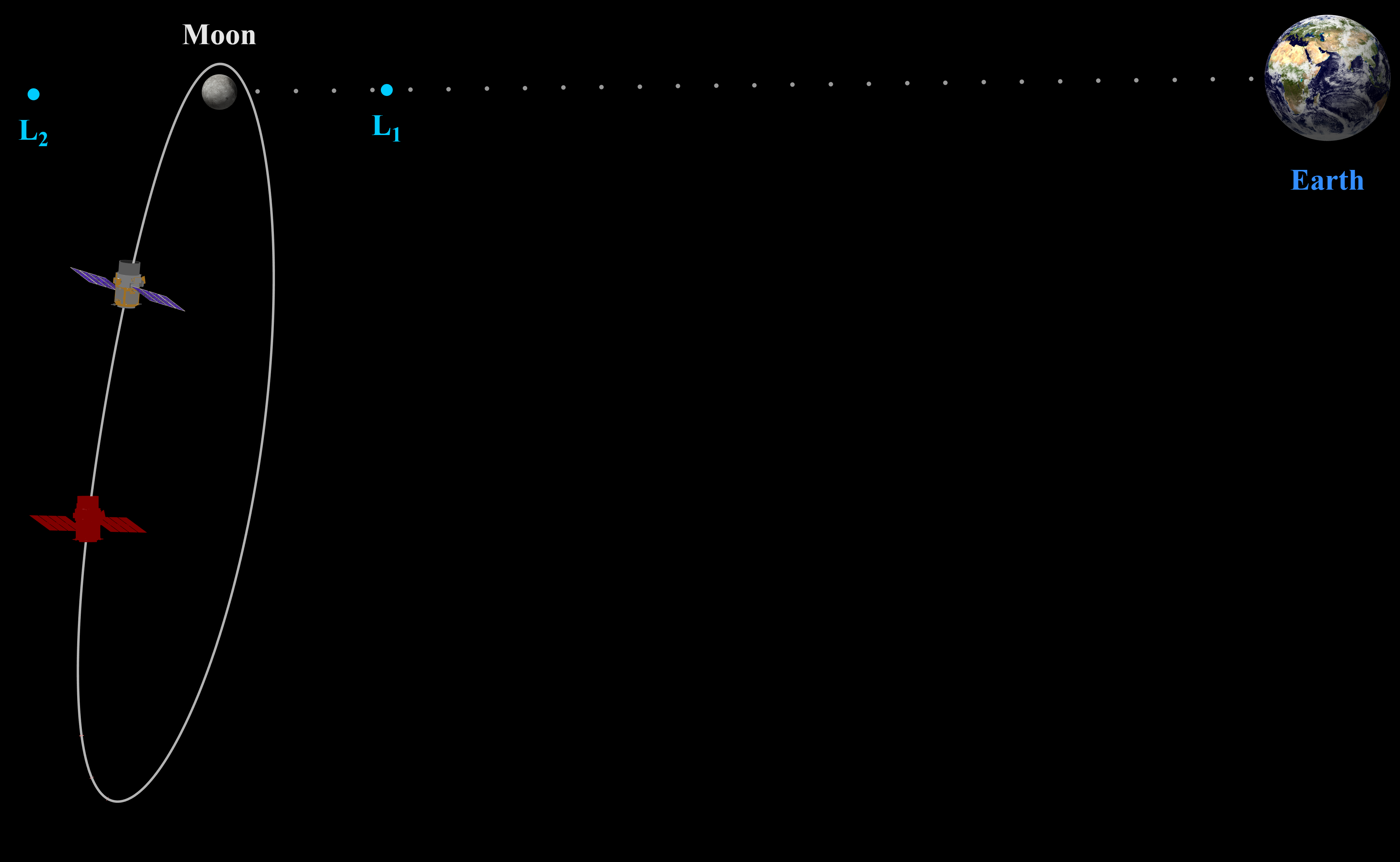}%
    }}
    \caption{The simulated near-rectilinear halo orbit. }
    \label{fig:halo_space}
\end{figure}

\section{Numerical Experiments}

In this section, we perform numerical experiments to illustrate the pursuit-evasion policies. Throughout these, to express the thrusts $\mf{u}_\mr{e},~\mf{u}_\mr{p}$ of the spacecraft directly in Newtons, we scale the input matrices $B_\mr{e}$ and $B_\mr{p}$ by the factor $\mr{TU}^2/\mr{LU}$, where $\mr{TU}$ and $\mr{LU}$ denote the characteristic time and length scales used in the non-dimensionalization of the CR3BP equations.

\subsection{Simulation of Saddle-Point Pursuit-Evasion Policies}

We consider an evader-pursuer interaction along the near-rectilinear halo orbit illustrated in Figure~\ref{fig:halo_space}, which has a period of $T=6.5~\textrm{days}$ (i.e., $1.466695~[\textrm{ND}]$). The evader’s objective is to drive its separation from the pursuer beyond $600~\textrm{km}$, while the pursuer aims to minimize this distance. Both spacecraft have an equal mass of $m_\mathrm{e}=m_\mathrm{p}=1000~\mathrm{kg}$, and are constrained to remain in the vicinity of the relatively stable reference halo orbit. For the initial configuration, the pursuer is at apolune and the evader is $6.38$ minutes ahead along the orbit (corresponding to $0.001$ in non-dimensional time).

We generate the control actions of the spacecraft according to the policies that solve the nonlinear differential game~\eqref{eq:nlgame}. We apply these policies in a model predictive control fashion, with a prediction horizon of $t_f=T$, a control horizon of $T/5$ (i.e., five control updates per orbital period), and $t_0=0$. We choose the cost parameters of the game~\eqref{eq:nlgame} as
\begin{align*}
&R_\mr{e}=0.025I_3, \quad R_\mr{p}=0.05I_3, \quad a_\mr{e}=0.005, \quad a_\mr{p}=0.01, \\
&Q_\mr{e0}=F_\mr{e0}=Q_\mr{p0}=F_\mr{p0}=5I_6, \quad w=2000, \quad p=2.1,
\end{align*}
which implies the evader spacecraft is twice as fast as the pursuer spacecraft.
We further select $d_0=660~\textrm{km}$, which is $10\%$ larger than the actual separation objective to account for the vanishing property of~\eqref{eq:evadecost} near $d_0$. To enable more aggressive evading maneuvers, we design the reference-tracking cost matrices according to~\eqref{eq:manifolds}. Specifically, the evader increases its aggressiveness when close to the pursuer by setting
\[
\alpha\big|_{t=t_r} = \max\left\{1 - \frac{\min_{t\in[t_r-T,t_r]}\norm{\mf{p}_\mr{e}(t)-\mf{p}_\mr{p}(t)}}{d_0},~0\right\},
\]
where $t_r$ is the replanning time instant. In other words, the evader monitors the minimum separation from the pursuer over the most recent orbital period and adjusts its evasion cost accordingly.

\begin{figure}[!t]
    \centering
        \includegraphics[width=0.94\linewidth]{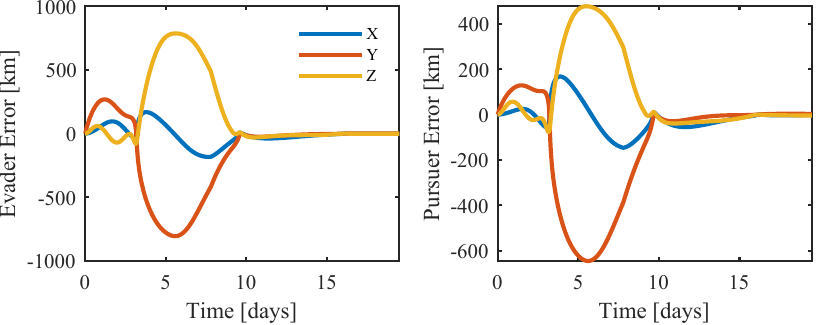}%
    \caption{The evolution of the position tracking errors $\mf{p}_\mr{i}-\mf{p}_{\mr{di}}$, $\mr{i}\in\{\mr{e},\mr{p}\}$, of the evader and the pursuer. }
    \label{fig:track}
\end{figure}

\begin{figure}[!t]
    \centering
        \includegraphics[width=0.94\linewidth]{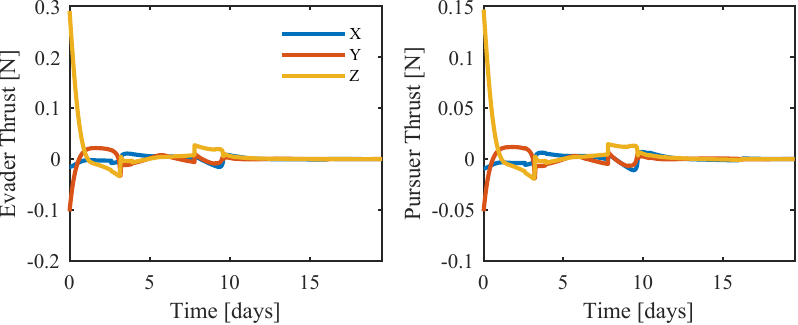}%
    \caption{The evolution of the thrust profiles $\mf{u}_\mr{i}$, $\mr{i}\in\{\mr{e},\mr{p}\}$, of the evader and the pursuer.}
    \label{fig:thrust}
\end{figure}

\begin{figure}[!t]
    \centering
        \includegraphics[width=0.94\linewidth]{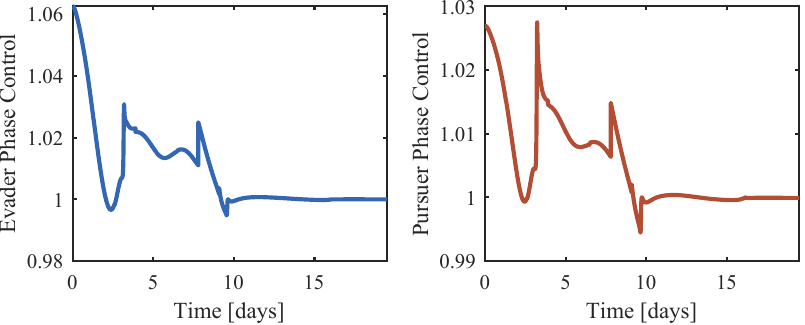}%
    \caption{The evolution of the phasing controls $\tau_\mr{i}$, $\mr{i}\in\{\mr{e},\mr{p}\}$, of the evader and the pursuer.}
    \label{fig:speedup}
\end{figure}

To solve the differential game~\eqref{eq:nlgame}, we apply the DDP Algorithm~\ref{al:DDP} initialized with $\bar{\mf{w}}_\mr{e}=\bar{\mf{w}}_\mr{p}=0$ and a tolerance of $\epsilon=10^{-4.5}$. We solve the forward equations with the MATLAB function \texttt{ode45}, and the backward equations with \texttt{ode113}. Whenever the backward integration fails, we regularize the equations by multiplying $\bar{\mr{Q}}_{\mf{w}_\mr{e}\mf{w}_\mr{e}}, \bar{\mr{Q}}_{\mf{w}_\mr{p}\mf{w}_\mr{p}}$ by $1.5$; whenever it succeeds, we de-regularize by dividing them by $1.5$ until their nominal values are restored. We select the parameter $\gamma$ from the set $\{1, 0.5, 0.25, 0.1\}$, choosing the first value that decreases the evader’s cost. Finally, to improve convergence, we initialize each DDP call with the optimal control sequences from the previous call, shifted forward in time to match the spacecraft’s current position along the orbit.

Figures \ref{fig:track}-\ref{fig:speedup} show the tracking errors $\mf{p}_\mr{i}-\mf{p}_{\mr{di}}=[I_3 ~ 0_3](\mf{x}_\mr{i}-\mf{x}_{\mr{di}})$, $\mr{i}\in\{\mr{e},\mr{p}\}$, of the pursuer and the evader, along with their thrusts and phasing controls. We observe that, during the first two periods, the tracking errors increase as the two spacecraft engage in a pursuit; however, the increase is modest relative to the scale of the halo orbit, because the spacecraft are constrained by their reference tracking objectives. Moreover, we observe that both spacecraft increase the phasing on their orbit as a means to evade/pursue, with the evader being faster by assumption.

\begin{figure}[!t]
    \centering
        \includegraphics[width=1\linewidth]{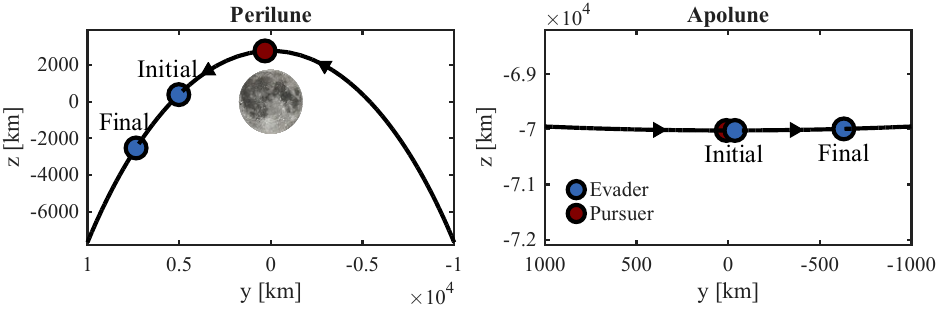}%
    \caption{The reference position of the evader at the instants the pursuer reaches perilune (left) and apolune (right), during the first and last revolutions about the near-rectilinear halo orbit.}
    \label{fig:zoom}
\end{figure}

\begin{figure}[!t]
    \centering
        \includegraphics[width=0.55\linewidth]{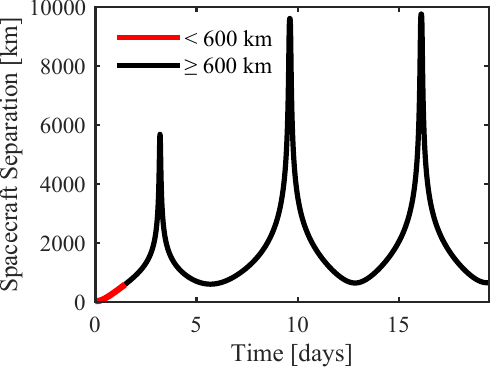}%
    \caption{The evolution of the separation $\norm{\mf{p}_\mr{e}-\mf{p}_\mr{p}}$ between the evader and the pursuer.}
    \label{fig:separation}
\end{figure}

Figure \ref{fig:zoom} shows the position of the evader during the initial and the final orbital period of the simulation, at the time instant when the pursuer is at perilune and apolune, respectively. Figure \ref{fig:separation} also shows the separation between the two spacecraft. We notice that the separation is generally more pronounced at perilune and less pronounced at apolune. This is because the trajectories evolve much faster at the former, and slow down significantly at the latter. In addition, we observe that the evader was able to permanently cross the $600~\textrm{km}$ separation threshold after only $1.5$ days, despite the use of low-thrust engines and the constraint of staying in the vicinity of the halo orbit. This showcases the effectiveness of the evader’s strategy in leveraging orbital dynamics to achieve sustained separation, even under strict thrust capabilities and orbital constraints, and highlights the potential for low-thrust spacecraft to perform meaningful evasive maneuvers in cislunar environments.

\subsection{Effect of Reference Phasing Control and Cost-Function Shaping}
To isolate the roles of reference phasing control and manifold-based cost shaping, we reran the study in three ablations: (i) cost-function shaping removed, (ii) phasing control turned off, and (iii) both mechanisms removed.
Figure \ref{fig:comp_separation_all} illustrates the effect of disabling phasing control and cost shaping. 

Reference phasing control has the most dominant impact: by adjusting the speed of its reference trajectory, the evader can accelerate or decelerate along-track to break phase lock with the pursuer. This prevents repeated close approaches while keeping thrust costs low, since control acts primarily in the tangential direction.
On the other hand, cost shaping with manifold information, in the absence of phase control, does provide some additional flexibility, but it is not enough on its own. In this case, the evader can avoid capture only by drifting far from its reference, resulting in inefficient thrusting. Moreover, when both mechanisms are disabled, the separation repeatedly falls well below the 600 km threshold, confirming that phasing control is indispensable and that cost shaping provides additional control authority. 

Finally, we conclude that combining both reference phasing control and cost shaping is the most effective approach for designing evading maneuvers. On the one hand, phasing control is essential, as Figure \ref{fig:comp_separation_all} illustrates. On the other hand, manifold shaping enhances the authority of phase control, enabling a quicker increase in spacecraft separation (Figure \ref{fig:separation}).
 Without it, comparable evading performance required slowing down the pursuer by a factor of 2.5 (i.e., increasing cost parameters to $a_\mr{e}=0.05$, $a_\mr{p}=0.25$).

\begin{figure}[!t]
  \centering
  \includegraphics[width=\linewidth]{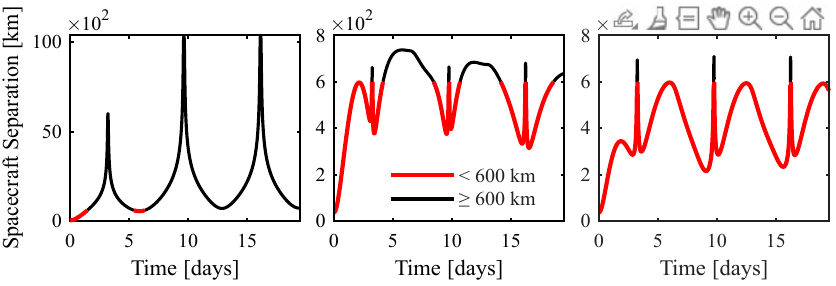}
  \caption{Evolution of the separation $\norm{\mf{p}_\mr{e}-\mf{p}_\mr{p}}$ under three ablations: (left) cost-function shaping removed, (middle) phasing control turned off, and (right) both mechanisms removed. }
  \label{fig:comp_separation_all}
\end{figure}

\begin{figure}[!ht]
    \centering
        \includegraphics[width=0.55\linewidth]{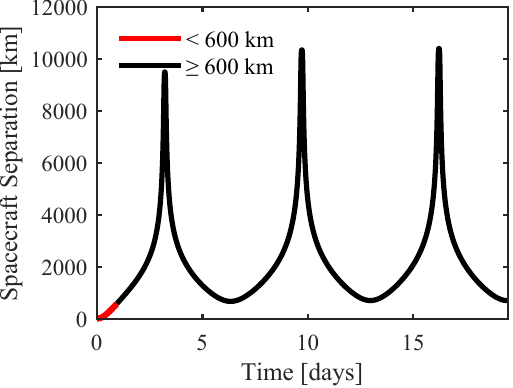}%
    \caption{The evolution of the separation $\norm{\mf{p}_\mr{e}-\mf{p}_\mr{p}}$ between the evader and the pursuer, when the pursuer uses a linear-quadratic policy.}
    \label{fig:separation2}
\end{figure}

\subsection{Evasion Against a Linear-Quadratic Baseline Pursuer}

To showcase the effectiveness of the evading policy against pursuing policies different from the saddle-point equilibrium, we simulate a pursuer whose policy is obtained from a linear-quadratic pursuit-evasion game. We describe this game and its solution in the Appendix. We select the parameters of the game to match those of the nonlinear dynamic game setup, with the weighting matrices of the evasion objective set to $M = M_f = 100~I_3$. When the pursuer-evader distance exceeds $600$~km, we assume the pursuer ceases pursuit and reverts to a standard linear-quadratic tracking controller (i.e., $M=M_f=0$).

Figure \ref{fig:separation2} shows the resulting separation between the two spacecraft. We observe that the evader crosses the $600$ km threshold in less than a day, which is faster than in the case where the pursuer uses the saddle-point policy of the nonlinear dynamic game. This distance also remains larger throughout the simulation. This highlights that, without appropriate modifications, simple linear-quadratic games are not suitable for adversarial pursuits in cislunar space.

\section{Conclusion}

We study adversarial pursuits in cislunar space by formulating and solving a nonlinear differential pursuit-evasion game in the CR3BP. This game incorporates critical aspects of the orbital geometry of cislunar space, including orbital phasing control and costs informed by stable/unstable manifolds. Simulations demonstrate that the combination of these components enables the evader to quickly and permanently increase its distance from the pursuer.

Future work will consider modeling of the potential jamming between the evader and the pursuer, following \cite{fotiadis2025optimal}.

\section*{Acknowledgments}
\vspace{3mm}
This work was sponsored in part by grants AFRL FA9550-23-1-0646 and AFOSR FA9550-22-1-0403.

\bibliography{sample}

@book{koon2000dynamical,
  title={Dynamical Systems, the Three-Body Problem and Space Mission Design},
  author={Koon, Wang Sang and Lo, Martin W and Marsden, Jerrold E and Ross, Shane D},
  year={2022},
  publisher={Marsden Books}
}

@INPROCEEDINGS{crusan2019nasa,
  author={Crusan, Jason and Bleacher, Jacob and Caram, Joe and Craig, Douglas and Goodliff, Kandyce and Herrmann, Nicole and Mahoney, Erin and Smith, Marshall},
  booktitle={2019 IEEE Aerospace Conference}, 
  title={NASA's Gateway: An Update on Progress and Plans for Extending Human Presence to Cislunar Space}, 
  year={2019},
  volume={},
  number={},
  pages={1-19},
  keywords={Logic gates;Moon;NASA;Propulsion;Logistics;Planetary orbits},
  doi={10.1109/AERO.2019.8741561}}

@article{baker2024comprehensive,
title = {A comprehensive review on Cislunar expansion and space domain awareness},
journal = {Progress in Aerospace Sciences},
volume = {147},
pages = {101019},
year = {2024},
issn = {0376-0421},
doi = {10.1016/j.paerosci.2024.101019},
author = {Brian Baker-McEvilly and Surabhi Bhadauria and David Canales and Carolin Frueh}
}

@article{holzinger2021primer,
  title={A Primer on Cislunar Space},
  author={Holzinger, Marcus J and Chow, C Channing and Garretson, Peter},
  year={2021},
}

@article{fu2025analytical,
title = {Analytical pursuit-evasion game strategy in arbitrary Keplerian reference orbits},
journal = {Aerospace Science and Technology},
volume = {158},
pages = {109946},
year = {2025},
issn = {1270-9638},
doi = {10.1016/j.ast.2025.109946},
author = {Shuyue Fu and Shengping Gong and Peng Shi}
}

@article{shen2018revisit,
author = {Shen, Hong-Xin and Casalino, Lorenzo},
title = {Revisit of the Three-Dimensional Orbital Pursuit-Evasion Game},
journal = {Journal of Guidance, Control, and Dynamics},
volume = {41},
number = {8},
pages = {1823-1831},
year = {2018},
doi = {10.2514/1.G003127}
}

@book{bacsar1998dynamic,
author = {Başar, Tamer and Olsder, Geert Jan},
title = {Dynamic Noncooperative Game Theory, 2nd Edition},
publisher = {Society for Industrial and Applied Mathematics},
year = {1998},
doi = {10.1137/1.9781611971132}
}

@article{mehlman2024cat,
author = {Mehlman, Cameron and Falco, Gregory},
title = {Satellite Chasers: Divergent Adversarial Reinforcement Learning to Engage Intelligent Adversaries on Orbit},
journal = {Journal of Aerospace Information Systems},
pages = {1-7},
year = {2025},
doi = {10.2514/1.I011632}
}

@article{sun2018min,
author = {Sun, Wei and Pan, Yunpeng and Lim, Jaein and Theodorou, Evangelos A. and Tsiotras, Panagiotis},
title = {Min-Max Differential Dynamic Programming: Continuous and Discrete Time Formulations},
journal = {Journal of Guidance, Control, and Dynamics},
volume = {41},
number = {12},
pages = {2568-2580},
year = {2018},
doi = {10.2514/1.G003516}
}

@book{isaacs1999differential,
  title={Differential Games},
  author={Isaacs, Rufus},
  year={1999},
  publisher={Courier Corporation}
}

@inproceedings{badura2023optimizing,
  title={Optimizing Distributed Space-Based Networks for Cislunar Space Domain Awareness In
the Context of Operational Cost Metrics},
  author={Badura, Gregory P and Gilmartin, Matthew and Shimane, Yuri and Crum, Stef and Visonneau, Lois and Valenta, Christopher R and Steffens, Michael and Cimtalay, Selcuk and Humphrey, Francis and Borowitz, Mariel and Gunter, Brian and Christian, John and Ho, Koki},
  booktitle={Proceedings of the Advanced Maui Optical and Space Surveillance (AMOS) Technologies Conference},
  pages={70},
  year={2023},
  organization={Maui Economic Development Board, Inc}
}

@INPROCEEDINGS{kurtsecurity,
  author={Benchoubane, Nesrine and Donmez, Baris and Ben Yahia, Olfa and Kurt, Gunes Karabulut},
  booktitle={2024 Security for Space Systems (3S)}, 
  title={{Securing Cislunar Missions: A Location-Based Authentication Approach}}, 
  year={2024},
  volume={},
  number={},
  pages={1-8},
  keywords={Space vehicles;Wireless communication;Space missions;Authentication;Physical layer security;Security;Reliability;Authentication;cislunar;satellite network;physical layer security;location estimation},
  doi={10.23919/3S60530.2024.10592290}}

@article{connor1984three,
  title={Three-dimensional, periodic,‘halo’orbits},
  author={Connor Howell, Kathleen},
  journal={Celestial mechanics},
  volume={32},
  number={1},
  pages={53--71},
  year={1984},
  publisher={Springer},
  doi={10.1007/BF01358403}
}

@article{gomez1998station,
  title={Station-keeping strategies for translunar libration point orbits},
  author={G{\'o}mez, Gerard and Howell, Kathleen C and Masdemont, Josep and Sim{\'o}, Carles},
  journal={Advances in Astronautical Sciences},
  volume={99},
  number={2},
  pages={949--967},
  year={1998}
}

@book{szebehely1967,
  author    = {Victor Szebehely},
  title     = {Theory of Orbits: The Restricted Problem of Three Bodies},
  publisher = {Academic Press},
  address   = {New York},
  year      = {1967},
  doi = {10.1016/B978-0-12-395732-0.X5001-6}
}

@article{roberts2024method,
title = {A method for assessing satellite operators’ compliance with geosynchronous orbital assignments},
journal = {Acta Astronautica},
volume = {221},
pages = {218-229},
year = {2024},
issn = {0094-5765},
doi = {10.1016/j.actaastro.2024.05.030},
author = {Thomas G. Roberts and Richard Linares}
}

@article{sankaran2022russia,
author={Sankaran, Jaganath},
title = {Russia's anti-satellite weapons: A hedging and offsetting strategy to deter Western aerospace forces},
journal = {Contemporary Security Policy},
volume = {43},
number = {3},
pages = {436--463},
year = {2022},
publisher = {Routledge},
doi = {10.1080/13523260.2022.2090070}
}

@article{roberts2020sustainable,
  title={A sustainable geostationary space environment requires new norms of behavior},
  author={Roberts, Thomas G and Bullock, Carson},
  journal={MIT Science Policy Review},
  volume={1},
  year={2020},
  publisher={PubPub},
 doi={10.38105/spr.kug3iij320}
}

@article{li2021overview,
  title={Overview of the Chang’e-4 Mission: Opening the Frontier of Scientific Exploration of the Lunar Far Side},
  author={Li, Chunlai and Zuo, Wei and Wen, Weibin and Zeng, Xingguo and Gao, Xingye and Liu, Yuxuan and Fu, Qiang and Zhang, Zhoubin and Su, Yan and Ren, Xin and others},
  journal={Space Science Reviews},
  volume={217},
  number={2},
  pages={35},
  year={2021},
  publisher={Springer},
 doi={10.1007/s11214-021-00793-z}
}

@article{advancedspace_news2023,
  author       = {{SatNews}},
  title        = {NASA's CAPSTONE Primary Mission Successfully Completed by Advanced Space},
  journal      = {SatNews},
  year         = {2023},
  howpublished = {\url{https://news.satnews.com/2023/05/18/nasas-capstone-primary-mission-successfully-completed-by-advanced-space/}}
}

@ARTICLE{regularization,
  author={Liao, L.-Z. and Shoemaker, C.A.},
  journal={IEEE Transactions on Automatic Control}, 
  title={{Convergence in Unconstrained Discrete-Time Differential Dynamic Programming}}, 
  year={1991},
  volume={36},
  number={6},
  pages={692-706},
  keywords={Convergence;Dynamic programming;Heuristic algorithms;Sufficient conditions;Nonlinear equations;Robustness;Jacobian matrices;Control theory;Large-scale systems;Application software},
  doi={10.1109/9.86943}}

@book{khalil2002nonlinear,
  title={Nonlinear Systems},
  author={Khalil, Hassan K.},
  volume={3},
  year={2002},
  publisher={Prentice Hall Upper Saddle River, NJ}
}

@article{fotiadis2025optimal,
  title={Optimal Satellite Maneuvers for Spaceborne Jamming Attacks},
  author={Fotiadis, Filippos and Rommel, Quentin and Sadler, Brian M and Topcu, Ufuk},
  journal={arXiv preprint arXiv:2505.12146},
  year={2025},
 doi={10.48550/arXiv.2505.12146}
}

\section*{Appendix: Linear-Quadratic Game for Cislunar Adversarial Pursuits}

Here, we formulate the adversarial pursuit in cislunar space as a linear-quadratic game, which requires a first-order approximation of the spacecraft dynamics \eqref{eq:spacecraftdyn} around the nominal orbits \eqref{eq:refs}. In that regard, we restrict attention to the case where the two spacecraft share the same nominal orbit, i.e., $c_\mr{p}(t)=c_\mr{e}(t)$ and hence $\mathbf{x}_\mr{de}(t)=\mathbf{x}_\mr{dp}(t)=\mathbf{x}_\mr{d}(t)$.

\subsection{Linear-Quadratic Game Formulation}

Define the orbital tracking errors as $\Delta\mathbf{x}_\textrm{i}=\mathbf{x}_\textrm{i}-\mathbf{x}_\mathrm{d}$ for $\mr{i}\in\{\mr{e}, \mr{p}\}$. Then, standard linear-systems theory yields the local approximation $\Delta\dot{\mathbf{x}}_\textrm{i}= A(\mathbf{x}_\mr{d}(t))\Delta\mathbf{x}_\textrm{i}+B_\textrm{i}\mf{u}_\textrm{i}$, $\mr{i}\in\{\mathrm{e}, \textrm{p}\}$, where $A(\mathbf{x}_\mr{d}(t))=\frac{\partial f(\mathbf{x})}{\partial \mathbf{x}}\big|_{\mathbf{x}=\mathbf{x}_\mathrm{d}(t)}$. Since $\mathbf{x}_\mr{d}$ is a nominal cislunar orbit that solves \eqref{eq:CR3BPc} under zero control input, the linearization is naturally taken about zero input. Moreover, using \eqref{eq:CR3BP}, we can calculate the Jacobian of the three-body dynamics function $f$ as
\begin{align}\nonumber
\frac{\partial f(\mathbf{x})}{\partial \mathbf{x}}=\begin{bmatrix}0 & 0 & 0 & 1 & 0 & 0 \\ 0 & 0 & 0 & 0 & 1 & 0 \\ 0 & 0 & 0 & 0 & 0 & 1 \\ f_{\dot{x}{x}}(\mathbf{x}) & f_{\dot{x}{y}}(\mathbf{x}) & f_{\dot{x}{z}}(\mathbf{x}) & 0 & 2 & 0\\ f_{\dot{y}{x}}(\mathbf{x}) & f_{\dot{y}{y}}(\mathbf{x}) & f_{\dot{y}{z}}(\mathbf{x}) & -2 & 0 & 0 \\ f_{\dot{z}{x}}(\mathbf{x}) & f_{\dot{z}{y}}(\mathbf{x}) & f_{\dot{z}{z}}(\mathbf{x})  &  0 & 0 & 0\end{bmatrix}
\end{align}
with
\begin{align*}
f_{\dot{x}{x}}(\mathbf{x})&=1-\frac{1-\mu}{r_e^3}+\frac{3(1-\mu)(x+\mu)^2}{r_e^5}-\frac{\mu}{r_m^3}+\frac{3\mu(x-1+\mu)^2}{r_m^5},\\
f_{\dot{x}{y}}(\mathbf{x})&=\frac{3(1-\mu)(x+\mu)y}{r_e^5}+\frac{3\mu(x-1+\mu)y}{r_m^5},\\
f_{\dot{x}{z}}(\mathbf{x})&=\frac{3(1-\mu)(x+\mu)z}{r_e^5}+\frac{3\mu(x-1+\mu)z}{r_m^5},\\
f_{\dot{y}{y}}(\mathbf{x})&=1-\frac{1-\mu}{r_e^3}+\frac{3(1-\mu)y^2}{r_e^5}-\frac{\mu}{r_m^3}+\frac{3\mu y^2}{r_m^5},\\
f_{\dot{y}{z}}(\mathbf{x})&=\frac{3(1-\mu)yz}{r_e^5}+\frac{3\mu yz}{r_m^5},\\
f_{\dot{z}{z}}(\mathbf{x})&=-\frac{1-\mu}{r_e^3}+\frac{3(1-\mu)z^2}{r_e^5}-\frac{\mu}{r_m^3}+\frac{3\mu z^2}{r_m^5},\\
f_{\dot{x}{y}}(\mathbf{x})&=f_{\dot{y}{x}}(\mathbf{x}),~f_{\dot{x}{z}}(\mathbf{x})=f_{\dot{z}{x}}(\mathbf{x}),~f_{\dot{y}{z}}(\mathbf{x})=f_{\dot{z}{y}}(\mathbf{x}).
\end{align*}

 Subsequently, note that the objective of the evader (pursuer) is to maximize (minimize) the evader-pursuer distance while tracking its nominal reference \eqref{eq:refs}. Hence, defining the concatenated state $\Delta\mathbf{x}=[\Delta\mathbf{x}_\mr{e}^\mr{T}~\Delta\mathbf{x}_\mr{p}^\mr{T}]^\mr{T}$ we design the linear-quadratic pursuit-evasion game as
\begin{equation}\label{eq:lqgame}
\begin{split}
\min_{\mathbf{u}_\mathrm{e}}\max_{\mathbf{u}_\mathrm{p}} ~J_\mr{LQ}(\mathbf{u}_\mathrm{e}, \mathbf{u}_\mathrm{p})=\int_{t_0}^{t_f} L_{\mr{LQ}}(\Delta\mathbf{x}(t),\mathbf{u}_\mathrm{e}(t), \mathbf{u}_\mathrm{p}(t))\mathrm{d}t + \phi_{\mr{LQ}}(\Delta\mathbf{x}(t_f))
\end{split}
\end{equation}
where
\begin{align*}
L_{\mr{LQ}} &:= \norm{\Delta{\mathbf{x}}_\mathrm{e}(t)}_{Q_\mathrm{e}(t)}^2 + \norm{\mathbf{u}_\textrm{e}(t)}_{R_\mathbf{e}(t)}^2 - \norm{\Delta{\mathbf{x}}_\mathrm{p}(t)}_{Q_\mathrm{p}(t)}^2 - \norm{\mathbf{u}_\textrm{p}(t)}_{R_\mr{p}(t)}^2  - \norm{\mathbf{p}_\textrm{e}(t)-\mathbf{p}_\textrm{p}(t)}_{M(t)}^2, \\ \phi_{\mr{LQ}}&:=\norm{\Delta{\mathbf{x}}_\mathrm{e}(t_f)}_{F_\mathrm{e}}^2 - \norm{\Delta{\mathbf{x}}_\mathrm{p}(t_f)}_{F_\mathrm{p}}^2 - \norm{\mathbf{p}_\textrm{e}(t_f)-\mathbf{p}_\textrm{p}(t_f)}_{M_f}^2,
\end{align*}
subject to
\begin{align*}
\Delta\dot{\mathbf{x}}_\mathrm{e}(t)& =A(\mathbf{x}_\mathrm{d}(t))\Delta\mathbf{x}_\mr{e}(t)+B_\mr{e}\mathbf{u}_\mathrm{e}(t), \quad \Delta{\mathbf{x}}_\mathrm{e}(t_0)={\mathbf{x}}_\mathrm{e0}-{\mathbf{x}}_\mathrm{d}(t_0),\\
\Delta\dot{\mathbf{x}}_\mathrm{p}(t)&=
A(\mathbf{x}_\mathrm{d}(t))\Delta\mathbf{x}_\textrm{p}(t)+B_\textrm{p}\mathbf{u}_\mathrm{p}(t), \quad \Delta{\mathbf{x}}_\mathrm{p}(t_0)={\mathbf{x}}_\mathrm{p0}-{\mathbf{x}}_\mathrm{d}(t_0),\
\end{align*}
where $Q_\textrm{e},~R_\textrm{e}, ~ F_\textrm{e},~Q_\textrm{p},~R_\textrm{p}, ~ F_\textrm{p}, ~M,~M_f\succ0$ are weighting matrices. 

In the cost of the game \eqref{eq:lqgame}, the terms $\|\Delta{\mathbf{x}}_\mathrm{e}(t)\|_{Q_\mathrm{e}(t)}^2$, $\|\Delta{\mathbf{x}}_\mathrm{e}(t_f)\|_{F_\mathrm{e}}^2$
incentivize the evader to remain close to the nominal orbit $\mf{x}_\mr{d}$,  the terms $- \|\mathbf{p}_\textrm{e}(t)-\mathbf{p}_\textrm{p}(t)\|_{M(t)}^2$, $- \|\mathbf{p}_\textrm{e}(t_f)-\mathbf{p}_\textrm{p}(t_f)\|_{M_f}^2$  incentivize avoiding the pursuer, and $\norm{\mathbf{u}_\textrm{e}(t)}_{R_\mathbf{e}(t)}^2$ captures the requirement that fuel consumption is minimal. The rest of the terms indicate reciprocal requirements for the pursuer. Moreover, the assumption that the pursuer maximizes \eqref{eq:lqgame} offers a security guarantee, in the sense that if $(\mf{u}_\mr{e}^\star,~\mf{u}_\mr{p}^\star)$ are a saddle-point solution to \eqref{eq:lqgame} then
\begin{equation*}
J_\mr{LQ}(\mathbf{u}_\mathrm{e}^\star, \mathbf{u}_\mathrm{p})\le J_\mr{LQ}(\mathbf{u}_\mathrm{e}^\star, \mathbf{u}_\mathrm{p}^\star)\le J_\mr{LQ}(\mathbf{u}_\mathrm{e}, \mathbf{u}_\mathrm{p}^\star), \quad \forall \mathbf{u}_\mathrm{e},~\mathbf{u}_\mathrm{p},
\end{equation*}
i.e., the cost of the evader is upper bounded by $J_\mr{LQ}(\mathbf{u}_\mathrm{e}^\star, \mathbf{u}_\mathrm{p}^\star)$ irrespective of the pursuer's strategy.

\subsection{Linear-Quadratic Game Solution}

In what follows, we cast \eqref{eq:lqgame} in the nominal linear-quadratic form and obtain its solution. Specifically, the following result characterizes the optimal strategies of the pursuer and the evader through a differential Riccati equation. 

\begin{theorem}\label{th:LQ}
Suppose that the differential Riccati equation
\begin{equation}\label{eq:DRE}
-\dot{S}(t)=\mf{A}^\mr{T}(t)S(t)+S(t)\mf{A}(t)+\mf{Q}(t)-S(t)\mf{B}_\mr{e}R_\mr{e}^{-1}(t)\mf{B}_\mr{e}^\mr{T}S(t)+S(t)\mf{B}_{\mr{p}}R_\mr{p}^{-1}(t)\mf{B}^\mr{T}_\mr{p}S(t),~S(t_f)=\mf{Q}_f,
\end{equation}
has a unique, symmetric, bounded solution $S:[t_0,~t_f]\rightarrow\mathbb{R}^{12\times12}$, where
\begin{equation*}
\mf{A}(t)=\begin{bmatrix*}A(\mathbf{x}_\mr{d}(t)) & 0_{6} \\  0_{6} & A(\mathbf{x}_\mr{d}(t)) \end{bmatrix*},~\mf{B}_\mr{e}=\begin{bmatrix}0_{3} & \frac{1}{m_\mr{e}}I_3 & 0_{3} & 0_{3} \end{bmatrix}^\mr{T}, ~\mf{B}_\mr{p}=\begin{bmatrix}0_{3} & 0_{3} & 0_{3} & \frac{1}{m_\mr{p}}I_3 \end{bmatrix}^\mr{T},
\end{equation*}
and
\begin{equation*}
~ \mf{Q}(t)=\begin{bmatrix}Q_\mr{e}(t)-\mf{M}(t) & \mf{M}(t) \\  \mf{M}(t) & -Q_\mr{p}(t)-\mf{M}(t) \end{bmatrix},~\mf{Q}_f=\begin{bmatrix}F_\mr{e}-\mf{M}_f & \mf{M}_f \\  \mf{M}_f & -F_\mr{p}-\mf{M}_f \end{bmatrix},~\mf{M}(t)=\begin{bmatrix}M(t) & 0_{3} \\ 0_{3} & 0_{3} \end{bmatrix}, ~\mf{M}_f=\begin{bmatrix}M_f & 0_{3} \\ 0_{3} & 0_{3} \end{bmatrix}.
\end{equation*}
Then, under closed-loop information pattern, the game \eqref{eq:lqgame} admits a saddle-point solution $(\mf{u}_\mr{e}^\star,~\mf{u}_\mr{p}^\star)$ given by
\begin{equation}\label{eq:saddle_lq}
\begin{split}
\mf{u}_\mr{e}^\star(\Delta\mf{x}(t), t)&=-R_\mr{e}^{-1}(t)\mf{B}_\mr{e}^\mr{T}S(t)\Delta\mathbf{x}(t), \\
\mf{u}_\mr{p}^\star(\Delta\mf{x}(t), t)&=R_\mr{p}^{-1}(t)\mf{B}_\mr{p}^\mr{T}S(t)\Delta\mathbf{x}(t),
\end{split}
\end{equation}
where $\Delta\mathbf{x}=[\Delta\mathbf{x}_\mr{e}^\mr{T} ~ \Delta\mathbf{x}_\mr{p}^\mr{T}]^\mr{T}$.
\end{theorem}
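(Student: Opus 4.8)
The plan is to recognize \eqref{eq:lqgame} as a standard concatenated linear-quadratic zero-sum differential game and to verify the claimed saddle point by the value-function/completing-the-square method. First I would assemble the running and terminal costs into quadratic forms in the concatenated state $\Delta\mf{x}=[\Delta\mf{x}_\mr{e}^\mr{T}~\Delta\mf{x}_\mr{p}^\mr{T}]^\mr{T}$. The tracking terms contribute the block-diagonal part $\mathrm{diag}(Q_\mr{e},-Q_\mr{p})$, while the separation penalty, written via the common reference as $\mf{p}_\mr{e}-\mf{p}_\mr{p}=[I_3~0_3](\Delta\mf{x}_\mr{e}-\Delta\mf{x}_\mr{p})$, expands into the off-diagonal blocks $+\mf{M}$ and the diagonal corrections $-\mf{M}$ that appear in the stated $\mf{Q}$; the same bookkeeping at $t_f$ yields $\mf{Q}_f$. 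Together with the concatenated dynamics $\Delta\dot{\mf{x}}=\mf{A}\Delta\mf{x}+\mf{B}_\mr{e}\mf{u}_\mr{e}+\mf{B}_\mr{p}\mf{u}_\mr{p}$, this casts the game in canonical form with cost $\int(\Delta\mf{x}^\mr{T}\mf{Q}\Delta\mf{x}+\mf{u}_\mr{e}^\mr{T}R_\mr{e}\mf{u}_\mr{e}-\mf{u}_\mr{p}^\mr{T}R_\mr{p}\mf{u}_\mr{p})\,\mr{d}t+\Delta\mf{x}(t_f)^\mr{T}\mf{Q}_f\Delta\mf{x}(t_f)$.

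Next I would posit the quadratic value function $V(\Delta\mf{x},t)=\Delta\mf{x}^\mr{T}S(t)\Delta\mf{x}$, with $S$ the assumed DRE solution, and verify that it solves the HJI equation. Because the bracketed Hamiltonian is strictly convex in $\mf{u}_\mr{e}$ (as $R_\mr{e}\succ0$), strictly concave in $\mf{u}_\mr{p}$ (as $R_\mr{p}\succ0$), and carries no $\mf{u}_\mr{e}$--$\mf{u}_\mr{p}$ coupling, Isaacs' condition holds and the pointwise min-max has a unique interior stationary point. Setting the gradients in $\mf{u}_\mr{e}$ and $\mf{u}_\mr{p}$ to zero, with $V_{\Delta\mf{x}}=2S\Delta\mf{x}$, returns exactly the feedback laws \eqref{eq:saddle_lq}; substituting them back collapses the right-hand side to $\Delta\mf{x}^\mr{T}[\mf{A}^\mr{T}S+S\mf{A}+\mf{Q}-S\mf{B}_\mr{e}R_\mr{e}^{-1}\mf{B}_\mr{e}^\mr{T}S+S\mf{B}_\mr{p}R_\mr{p}^{-1}\mf{B}_\mr{p}^\mr{T}S]\Delta\mf{x}$, which matches $-\Delta\mf{x}^\mr{T}\dot{S}\Delta\mf{x}$ precisely when $S$ solves \eqref{eq:DRE}, while $S(t_f)=\mf{Q}_f$ reproduces $\phi_\mr{LQ}$.

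The decisive step is the global completing-the-square identity. Along an arbitrary admissible pair $(\mf{u}_\mr{e},\mf{u}_\mr{p})$ I would differentiate $W(t)=\Delta\mf{x}(t)^\mr{T}S(t)\Delta\mf{x}(t)$, substitute the dynamics and the DRE for $\dot{S}$, and complete the square in each control to obtain $\dot{W}=-L_\mr{LQ}+\norm{\mf{u}_\mr{e}-\mf{u}_\mr{e}^\star}_{R_\mr{e}}^2-\norm{\mf{u}_\mr{p}-\mf{u}_\mr{p}^\star}_{R_\mr{p}}^2$. Integrating over $[t_0,t_f]$ and using $S(t_f)=\mf{Q}_f$ gives $J_\mr{LQ}(\mf{u}_\mr{e},\mf{u}_\mr{p})=\Delta\mf{x}(t_0)^\mr{T}S(t_0)\Delta\mf{x}(t_0)+\int_{t_0}^{t_f}(\norm{\mf{u}_\mr{e}-\mf{u}_\mr{e}^\star}_{R_\mr{e}}^2-\norm{\mf{u}_\mr{p}-\mf{u}_\mr{p}^\star}_{R_\mr{p}}^2)\,\mr{d}t$. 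Since $R_\mr{e}\succ0$ and $R_\mr{p}\succ0$, fixing $\mf{u}_\mr{p}=\mf{u}_\mr{p}^\star$ makes the integrand nonnegative, yielding the right inequality, and fixing $\mf{u}_\mr{e}=\mf{u}_\mr{e}^\star$ makes it nonpositive, yielding the left inequality; hence \eqref{eq:saddle_lq} is a saddle point with value $\Delta\mf{x}(t_0)^\mr{T}S(t_0)\Delta\mf{x}(t_0)$.

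I expect the main obstacle to be bookkeeping rather than conceptual: carefully propagating the block structure of $\mf{Q}$, $\mf{B}_\mr{e}$, and $\mf{B}_\mr{p}$ through the completing-the-square algebra and confirming that the cross-coupling induced by the separation term lands exactly in the off-diagonal $\mf{M}$ blocks. The only genuinely substantive hypothesis is the assumed existence of a bounded, symmetric $S$ on all of $[t_0,t_f]$; this is what rules out a finite-escape (conjugate-point) blow-up of the quadratic DRE and guarantees that $W$ and the integrals above are well-defined, so that the completing-the-square identity, and with it the closed-loop saddle point, holds over the entire horizon.
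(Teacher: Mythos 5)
Your proposal is correct, and its first half coincides with the paper's own proof: you perform the same concatenation of the error dynamics into $\Delta\dot{\mf{x}}=\mf{A}\Delta\mf{x}+\mf{B}_\mr{e}\mf{u}_\mr{e}+\mf{B}_\mr{p}\mf{u}_\mr{p}$ and the same bookkeeping that turns the separation penalty, via $\mf{p}_\mr{e}-\mf{p}_\mr{p}=[I_3~0_3](\Delta\mf{x}_\mr{e}-\Delta\mf{x}_\mr{p})$, into the $\pm\mf{M}$ blocks of $\mf{Q}$ and $\mf{Q}_f$. The difference lies in how the saddle point is then established: the paper stops once the game is in standard linear-quadratic form and simply invokes Theorem 6.17 of \cite{bacsar1998dynamic}, whereas you prove that result from scratch by the verification/completion-of-squares argument, differentiating $W(t)=\Delta\mf{x}(t)^\mr{T}S(t)\Delta\mf{x}(t)$ along arbitrary admissible controls, substituting the DRE, and arriving at $J_\mr{LQ}(\mf{u}_\mr{e},\mf{u}_\mr{p})=\Delta\mf{x}(t_0)^\mr{T}S(t_0)\Delta\mf{x}(t_0)+\int_{t_0}^{t_f}\big(\norm{\mf{u}_\mr{e}-\mf{u}_\mr{e}^\star}_{R_\mr{e}}^2-\norm{\mf{u}_\mr{p}-\mf{u}_\mr{p}^\star}_{R_\mr{p}}^2\big)\mr{d}t$, from which both saddle inequalities are immediate. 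What your route buys is a self-contained argument that exhibits the game value explicitly as $\Delta\mf{x}(t_0)^\mr{T}S(t_0)\Delta\mf{x}(t_0)$ and makes transparent exactly where the hypothesis of a bounded, symmetric $S$ on $[t_0,t_f]$ enters (it rules out finite-escape of the quadratic DRE, so the identity can be integrated over the full horizon); what the paper's route buys is brevity and reliance on established theory, which also handles the information-pattern technicalities. On that point, one small caution: since the strategies \eqref{eq:saddle_lq} are feedback laws, in your completing-the-square identity $\mf{u}_\mr{e}^\star$ and $\mf{u}_\mr{p}^\star$ must be read as those feedback laws evaluated along the realized trajectory of the deviating player, not as fixed open-loop signals; with that reading, which your argument uses implicitly, the conclusion holds under the closed-loop information pattern stated in the theorem and your proof is complete.
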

\begin{proof}
Note that we can write the dynamics of the game \eqref{eq:lqgame} in the compact form
\begin{equation}\nonumber
\Delta\dot{\mathbf{x}}(t)=\mf{A}(t)\Delta{\mathbf{x}}(t)+\mf{B}_\mr{e}\mf{u}_\mr{e}(t)+\mf{B}_\mr{p}\mf{u}_\mr{p}(t).
\end{equation}
In addition, we have $\mf{p}_\mathrm{e}-\mf{p}_\mathrm{p}=[I_3 ~ 0_{3} ](\Delta\mathbf{x}_\mathrm{e}-\Delta\mathbf{x}_\mathrm{p})$, and hence 
\begin{align*}
-(\mf{p}_{\mr{e}}(t)-\mf{p}_{\mr{p}}(t))^\mathrm{T}M(t)(\mf{p}_{\mr{e}}(t)-\mf{p}_{\mr{p}}(t))&=\Delta\mathbf{x}(t)^\mr{T}\begin{bmatrix}-\mf{M}(t) & \mf{M}(t) \\ \mf{M}(t) & -\mf{M}(t)\end{bmatrix}\Delta\mathbf{x}(t), \\ -(\mf{p}_{\mr{e}}(t_f)-\mf{p}_{\mr{p}}(t_f))^\mathrm{T}M_f(\mf{p}_{\mr{e}}(t_f)-\mf{p}_{\mr{p}}(t_f))&=\Delta\mathbf{x}^\mr{T}(t_f)\begin{bmatrix}-\mf{M}_f & \mf{M}_f \\ \mf{M}_f & -\mf{M}_f\end{bmatrix}\Delta\mathbf{x}(t_f).
\end{align*}
Combining these details, we can write \eqref{eq:lqgame} in the compact form:
\begin{equation}\label{eq:cost_compact}
\begin{split}
\min_{\mathbf{u}_\mathrm{e}}\max_{\mathbf{u}_\mathrm{p}} J_\mr{LQ}(\mathbf{u}_\mathrm{e}, \mathbf{u}_\mathrm{p})&=\int_{t_0}^{t_f} \left(\norm{\Delta\mf{x}(t)}_{\mf{Q}(t)}^2 +\norm{\mf{u}_{\mr{e}}(t)}_{R_\mr{e}(t)}^2 - \norm{\mf{u}_{\mr{p}}(t)}_{R_\mr{p}(t)}^2\right)\mathrm{d}t+ \norm{\Delta\mf{x}(t_f)}_{\mf{Q}_f}^2,\\
\textrm{s.t.} \qquad\Delta\dot{\mathbf{x}}(t)&=\mf{A}(t)\Delta{\mathbf{x}}(t)+\mf{B}_\mr{e}\mf{u}_\mr{e}(t)+\mf{B}_\mr{p}\mf{u}_\mr{p}(t).
\end{split}
\end{equation}
Finally, note that \eqref{eq:cost_compact} is a game in the standard linear-quadratic form. Hence, the final result follows from Theorem 6.17 in \cite{bacsar1998dynamic}. \frQED
\end{proof}

\end{document}